\newcommand{\blind}{0}
\theoremstyle{plain}
\newtheorem{theorem}{Theorem}
\newtheorem{lemma}[theorem]{Lemma}
\theoremstyle{definition}
\newtheorem{definition}{Definition}
\newcommand{\customlabel}[2]{%
\protected@write \@auxout {}{\string \newlabel {#1}{{#2}{}}}}
\DeclareMathOperator{\abs}{\text{abs}}
\begin{document}

\def\spacingset#1{\renewcommand{\baselinestretch}%
{#1}\small\normalsize} \spacingset{1}


\if0\blind
{
  \title{\bf Estimating Product Cannibalisation in Wholesale using Multivariate Hawkes Processes with Inhibition}
  \author{Isabella Deutsch\thanks{
    \texttt{isabella.deutsch@ed.ac.uk}}\hspace{.2cm} \\
    School of Mathematics, University of Edinburgh\\
    and \\
    Gordon J. Ross \\
     School of Mathematics, University of Edinburgh}
  \maketitle
} \fi

\if1\blind
{
  \bigskip
  \bigskip
  \bigskip
  \begin{center}
    {\LARGE\bf Title}
\end{center}
  \medskip
} \fi

\bigskip
\begin{abstract}
Product cannibalisation in the marketplace refers to the decrease in the sales of one product due to competition from another product. We examine this phenomenon in a wholesale data set provided by an international company. We use a multivariate Hawkes process where each product is represented by a dimension, with cross-inhibition effects that model product cannibalisation. To implement the Hawkes process with inhibition we resolve challenges regarding the integration of the intensity function and introduce a new, stronger conditions for stability as existing conditions are unnecessarily strict under inhibition. We conduct our analysis in a Bayesian framework, for which we design a dimension-independent prior on the cross-inhibition based on a reparametrisation. 
\end{abstract}

\noindent%
{\it Keywords:}  Intensity Function, Point Process, Prior Choice, Stability
\vfill

\newpage
\spacingset{1.5} 

\section{Introduction} \label{sec_intro}

The purchase of one product can influence the sales volume of others in the same product category. For example, a consumer may choose to buy one particular good instead of a different (yet similarly designed) one. Such an effect is known as ``product cannibalisation'. More formally, product cannibalisation in the marketplace is defined as the decrease in sales of one product due to the sales of  a closely related product, or to the introduction of a new, similar product \citep{copulsky_cannibalism_1976}. In this paper we suggest using a multivariate Hawkes process with inhibition to estimate product cannibalisation in a Bayesian framework. We resolve issues around its estimation, including the integration of the intensity function, dimension-independent prior specification, and stability for Hawkes processes with inhibition.

Understanding product cannibalisation is important for a variety of market participants \citep{de_giovanni_product_2017}. For example, the producers of goods can use knowledge about product cannibalisation to improve product catalogues \citep{child_smr_1991, desai_quality_2001}, wholesalers can optimise for price \citep{de_giovanni_product_2017}, and retailers can make more informed decisions about which products to display on the shop floor \citep{kong_cannibalization_2015}. However, there exists ``\textit{little factual knowledge about the potential of market cannibalisation`}` \citep{atasu_so_2010}. 

Multiple approaches have been suggested to describe the dependence structure between product sales. A popular model comes from \citet{ruiz_shopper_2017} who consider the sequential choices of individual shoppers and how items can interact with each other. Other authors focus on similar articles that differ mainly in their quality \citep{desai_quality_2001}. In this context \citep{ghose_internet_2006} examine the impact of used books on online book sales. At each book purchase, a consumer can opt for a brand-new item or a pre-owned one. For example, it is of interest at what price point a consumer switches to buying a used version of a title, hence cannibalising the sales of new books. \citet{okorie_triple_2021} provide a meta-analysis for product cannibalisation focusing on re-manufactured goods in the circular economy. 

Several of the examples mentioned above use simple models driven by marketing theory \citep{atasu_so_2010} or game theory \citep{de_giovanni_product_2017}. A different approach comes from \citet{kamakura_predicting_1984}, who include product cannibalisation in their probabilistic choice models that tries to estimate the utility of each article. More recently, \citet{guidolin_has_2020} employ the Lotka-Volterra equations, a pair of non-linear differential equations commonly used in predator-prey scenarios, to capture the sales of the Apple iPhone, once the iPad was introduced. This models potential cannibalisation, e.g. a person buys an iPad rather than an iPhone or vice versa. Notably, this approach allows for asymmetric competition, e.g. more iPhone sold do not lead to less sales of iPads, but more iPad sales do lead to fewer iPhone sales. \citet{kong_cannibalization_2015} use a logit regression to model the sales of an article depending on a variety of covariates, such as availability and display inside the store. Crucially, they also include product cannibalisation (based on price and similarity) in their model.

The literature focuses on product cannibalisation for cumulative sales numbers instead of taking the temporal nature of (repeated) purchases into account. In recent years, a few notable exceptions have appeared. For example, \citet{aguilar-palacios_causal_2021} use a causal time series model to quantify product cannibalisation in grocery sales. Machine learning techniques have also very recently been employed in this field of study. For example, \citet{bekal_xgboost-based_2021} use boosting to predict sales in the presence of product cannibalisation, while \citet{garnier_concurrent_2022} rely on neural networks to uncover the ``competition'' between time series. 

Most of the approaches from the literature assume \textit{retail} data, i.e. good sold to the general public \citep{us_census_bureau_section_2011}. For example, \citet{garnier_concurrent_2022} record when a individual consumer purchased a particular freezer. In a business context data may only be available at a \textit{wholesale} level. Wholesale trade is defined as selling goods, often in large quantities, to other businesses who then might sell to the consumer \citep{us_census_bureau_section_2011}. In the following we refer to such a business who purchases items for retail distribution as a \textit{wholesale customer}.

In contrast to the above approaches, we propose to model cannibalisation using a multivariate point process, where each product is represented by one dimension, with the sales being the events. Point processes have been successfully used to model purchasing behaviour \citep[see, for example, ][]{pitkin_bayesian_2018}, but to the best of our knowledge, it is the first time that they are employed to estimate product cannibalisation.  More specifically, we model sales using a \textit{Hawkes process}. Classic Hawkes processes \citep{hawkes_spectra_1971} are point processes that describe the self-exciting behaviours of events. They are used to model events that occur in clusters or bursts, where one event makes it more likely that another event is happening soon after. Hawkes processes have been successfully applied in many application domains such as seismology  \citep{ogata_statistical_1988}, crime and terror modelling \citep{mohler2013modeling, Shelton, tucker_handling_2019}, accidents \citep{kalair_non-parametric_2021}, stock market trading \citep{rambaldi_role_2017}, and social media analytics \citep{lai_topic_2016}. 

While most applications of the Hawkes process only modelling excitation (i.e. the occurrence of one event makes future events more likely), Hawkes processes can also be used to capture inhibition (i.e. the occurrence of one event makes future events less likely) which is more relevant to the product cannibalisation setting. A prominent application of Hawkes processes with inhibition can be found in neural spike trains, where the inhibition captures the period of decreased activity after a neural spike event \citep{eichler_graphical_2017}. 

Inhibition is a substantial extension to the general Hawkes process, which bring additional challenges and subtleties to the estimation procedure. In this paper we address issues arising from a potentially negative intensity function (Section~\ref{subsec_nonnegative}), that then feed into problems when evaluating the likelihood as the intensity function needs to be integrated (Section~\ref{subsec_integrate_intensity}). We also examine conditions used to determine stability, which are unnecessarily strict when inhibition is present (Section~\ref{sec_stability}). Therefore we introduce a new condition that is stronger than both currently used conditions. These deliberations are not limited to the product cannibalisation example, but are relevant for all applications of Hawkes processes with inhibition.

The remainder of this paper is organised as follows. Section~\ref{sec_data} introduces the data. Section~\ref{sec_background_Hawkes} reviews the $1$ and $M$-dimensional Hawkes process and describes inhibition and the related challenges due to a potentially negative intensity function. In Section~\ref{sec_hawkes_for_product} we introduce our complete model used to estimate product cannibalisation. The estimation, using Stan, is then discussed in Section~\ref{sec_estimation}. In Section~\ref{sec_priorchoice} we examine the priors used for all parameters. Particular consideration is given to the influence parameter, for which we design a dimension-independent prior given a reparametrisation based on the branching structure. Section~\ref{sec_stability} contains theoretical consideration regarding stability in Hawkes processes with inhibitions. Section~\ref{sec_application_real} gives real world examples of product cannibalisation, first a two-dimensional example, then a four-dimensional example.



\section{Data} \label{sec_data}

We are collaborating with a major international company, anonymised as \textit{CompanyCo}, who have given us access to some of their data. We are not able to publish the company's name or its industry due to non-disclosure agreements. Crucially, this data is recorded on a wholesale level. Hence, it is not possible to infer when an individual consumer bought an item, but we know when a wholesale customer has placed an order with CompanyCo containing a certain article. 

Each order placed by a wholesale customer at CompanyCo is recorded as one line in the data and consists of a certain quantity of exactly one article. If multiple articles are ordered at the same time, then there will be a distinct entry per article. Each entry contains information on the article, the wholesale customer, and the order itself (e.g. the date on which the order was placed, and the quantity ordered). For this analysis we focus on the arrival of orders without additional covariates. These order dates are recorded in discrete time where only the day of the order is known. We add random noise with distribution $\mathcal{U}(0,1)$ to each event to be able to use a continuous time point process. However, if events (i.e. purchases of different articles) happened on the same day in the original data set, they are also set to the same (continuous) time in the altered data set. This avoids the introduction of spurious excitation as events that happen at the same day cannot influence each other.

The products sold by CompanyCo are goods used by individual consumers. Items that differ in appearance, but otherwise have the same design, are classified as different articles.  There are also labels within the CompanyCo brand that differ in price point or cater to particular consumer groups. One example of such a label is anonymised as \textit{SomeLabel}. For this analysis we examine two categories of products from CompanyCo's portfolio, which we call product class A and product class B.

For our application we will focus on the orders placed by one wholesale customer anonymised as \textit{BusinessGroup}. This particular wholesale customer was chosen by CompanyCo for in-depth analysis due to their medium sized and limited purchasing power such that they need to make active selections on which articles to stock. 

First, we examine when BusinessGroup is placing orders that contain products from one product class with CompanyCo. This is the same product class which examine in Section~\ref{sec_app_M4_B}. In Figure~\ref{fig_product_8} we plot the orders placed for eight articles over the course of 18 months. We observe a distinct seasonality in the orders of BusinessGroup. According to CompanyCo this is a typical pattern driven by the preorders and reorders of goods in a year, which is prevalent in most wholesale customer groups. 

\begin{figure}[htp]
    \centering
    \includegraphics[width=.9\linewidth]{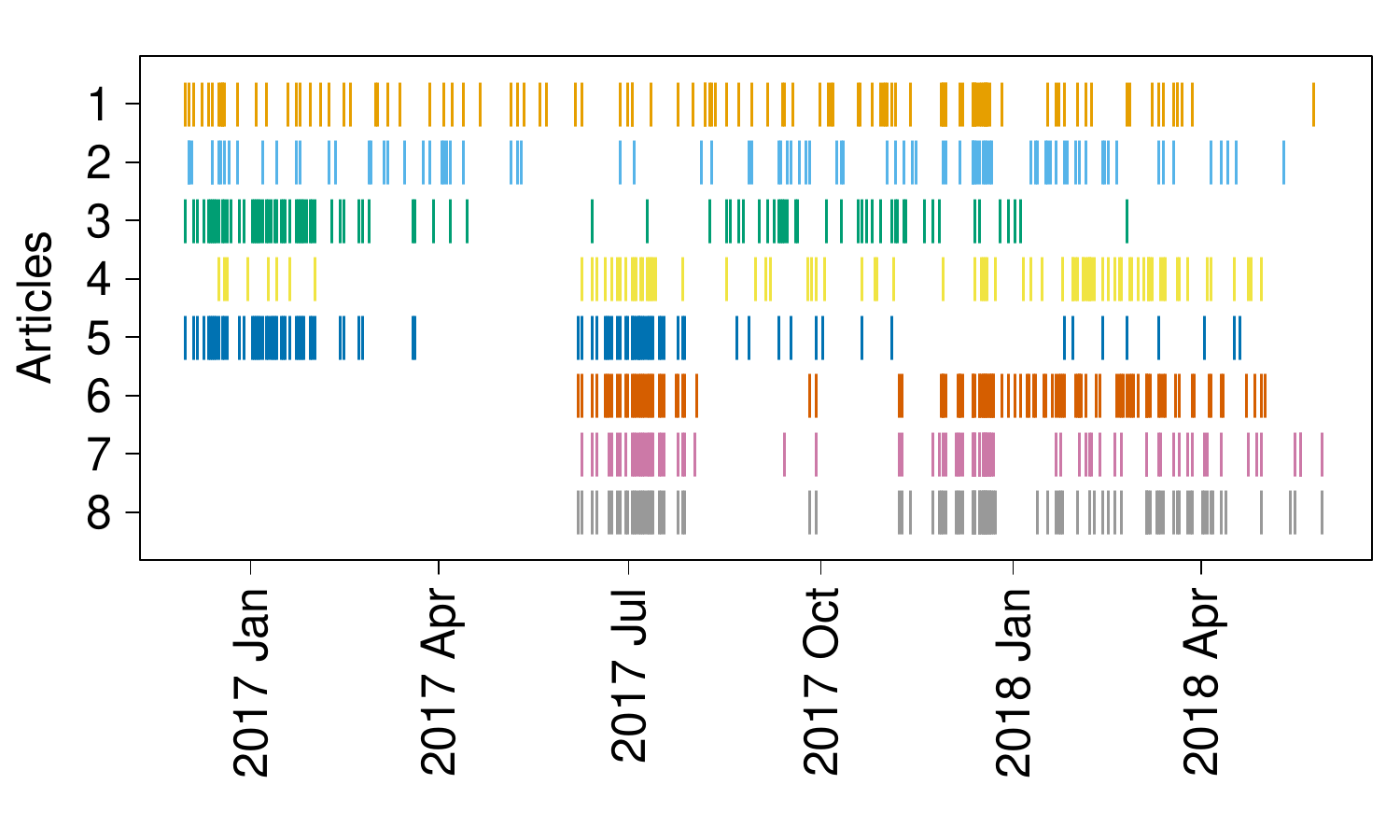}
    \caption{Orders placed for eight products from product class B by BusinessGroup. Each vertical bar indicates that on the particular day an order was placed that included the respective article.}
    \label{fig_product_8}
\end{figure}

There are three patterns that can be observed: monthly and weekday seasonality, and a Christmas indicator (defined as $24^{th}$ to $27^{th}$ of December). Figure~\ref{fig_monthly} displays the number of orders for products in product class B placed by BusinessGroup per month over $1.5$, years which gives a clear indication for a monthly variation. Figure~\ref{fig_weeklychristmas} plots the daily number of orders for December 2016 and January 2017. The plot shows that the number of orders is varies greatly between weekdays. In addition, there is a sharp drop in orders over the Christmas period, irrespective of the weekday and the otherwise large order numbers in December. Section~\ref{subsec_est_backgr} describes how these three characteristics are utilised in the background rate of our model. This seasonal pattern is also prevalent across product classes.

\begin{figure}[tp]
    \centering
    \includegraphics[width=.99\linewidth]{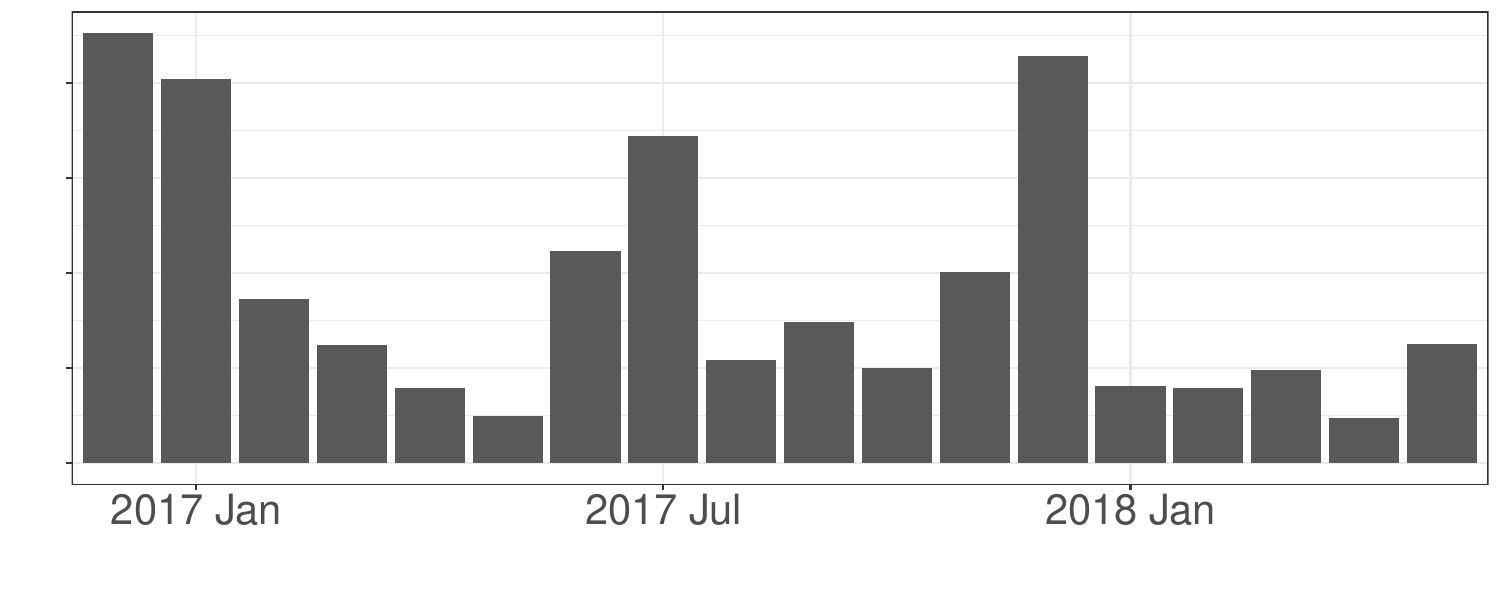}
    \caption{Orders placed each month for products in product class B by BusinessGroup. The y-axis has been removed for data protection reasons.}
    \label{fig_monthly}
\end{figure}

\begin{figure}[tp]
    \centering
    \includegraphics[width=.99\linewidth]{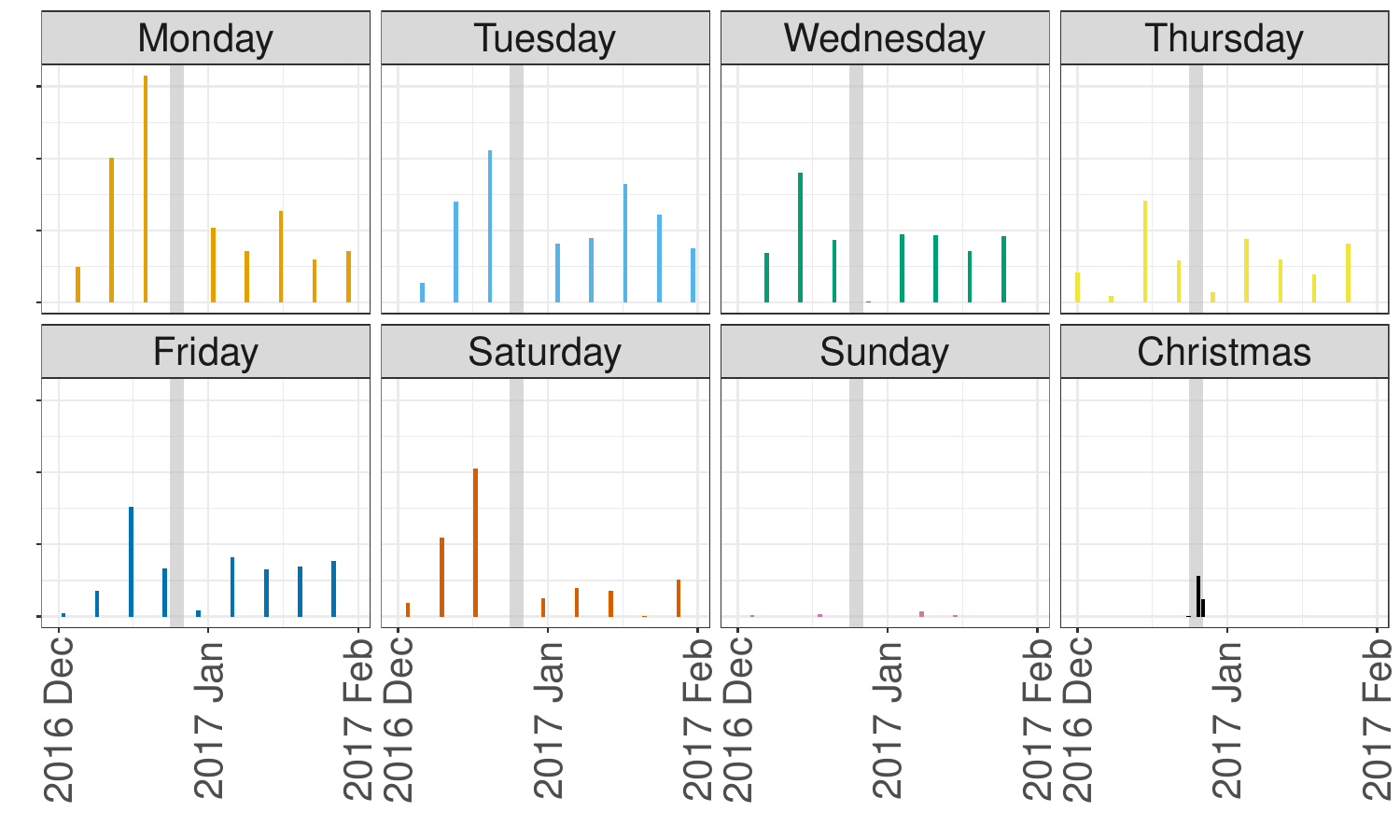}
    \caption{Total number of orders placed for products in product class B by BusinessGroup for December 2016 and January 2017 per day of the week and Christmas period. The shaded area highlights the Christmas period in each subplot. The y-axis has been removed for data protection reasons.}
    \label{fig_weeklychristmas}
\end{figure}

We now examine individual articles to motivate our search for product cannibalisation. Figure~\ref{fig_products2} plots the arrival of orders by BusinessGroup for two similar products. Apart from the seasonal variability two trends are visible. Firstly, orders for each articles are placed in rapid succession of themselves, i.e. when Article 1 is ordered it becomes more likely that the same article will be ordered again soon. The same holds true for Article 2. This \textit{self-excitation} is layered on top of the seasonal trends discussed above. Secondly, the point process displays \textit{cross-inhibition} where the order of Article 1 makes it less likely that Article 2 is purchased. Interestingly, this relationship also holds in reverse where a purchase of Article 2 makes it less likely that Article 1 is ordered. Given the similarities of the two articles (appearance, suggested retail price, no particular label) it is sensible to interpret this inhibition as product cannibalisation. 

\begin{figure}[tp]
    \centering
    \includegraphics[width=.9\linewidth]{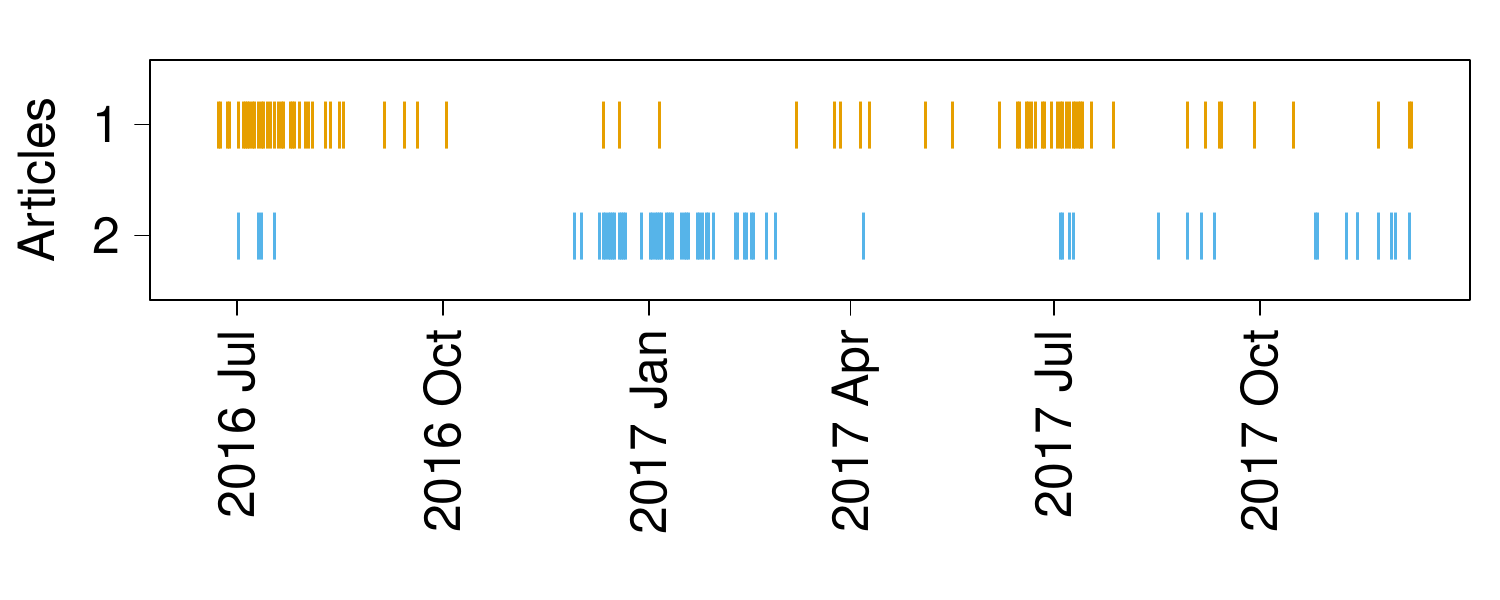}
    \caption{Orders placed for two products by BusinessGroup. Each vertical bar indicates that on the particular day an order was placed that included the respective article.}
    \label{fig_products2}
\end{figure}


\section{Hawkes Processes} \label{sec_background_Hawkes}

To estimate product cannibalisation we choose a multivariate Hawkes process where each article is represented by a dimension and we interpret cross-dimensional inhibition as product cannibalisation. Therefore we now examine the Hawkes process in more detail. This section considers the $1$ and $M$-dimensional Hawkes process, gives more detail on inhibition, and summarises the branching interpretation of the Hawkes process. The latter will then be utilised in a reparametrisation on which we base our prior (Section~\ref{sec_priors_Kstar}).

\subsection{Univariate Hawkes Process}

We first review the univariate (= $1$-dimensional) Hawkes process. For each event from a point process we record the time when it happened $Y = \left(t_1 \dots t_N \right) \in \left[0, T_{max}\right]$, such that $t_i \in \mathbb{R}^+$ is the event time at which the $i^{th}$ event took place. The self-exciting linear Hawkes process is defined by its conditional intensity function $\lambda(t | Y^t, \Theta)$, which at time $t$ is conditional on the previous events $Y^t = \{t_i : t_i < t \}$ \citep{hawkes_spectra_1971} and parameters $\Theta$ used to specify the parametric form of intensity. For convenience of notation the dependence on $Y^t$ and $\Theta$ is suppressed further on. 

The following specification encompasses the self-exciting nature: 
\begin{equation} 
    \lambda(t| Y^t, \Theta) = \mu(t|\Theta) + \sum_{i: t_i < t} K \, g(t - t_i| \Theta)
    \label{eqn:hawkes_simple}
\end{equation} 
Here, $\mu(\cdot) > 0$ is the background rate that can capture seasonality and underlying trends. We call $g(\cdot)$ the influence kernel, where $g(x) \geq 0$ for $x \geq 0 $ and $\int_0^\infty g(x) \, dx = 1$. This decides how much the influence is spread out over time, whereas $K$ captures the overall magnitude of the influence. A classic Hawkes process restricts $K \geq 0$, which only allows excitation. When $K \leq 0$, which corresponds to inhibition, further considerations are required. These are discussed in Section~\ref{subsec_inhibition}.

Each observation prior to $t$ contributes to the intensity at time $t$ as governed by the kernel $g(\cdot)$ and $K$. This drives the self-exciting behaviour of the Hawkes process. Both the background rate and the kernel depend on parameters $\Theta = (\theta_\mu, K, \theta_g)$ where $\theta_\mu$ and $\theta_g$ contains all parameters from $\mu(\cdot)$ and $g(\cdot)$, respectively. To estimate them, a plethora of methods is successfully employed in the literature, for example maximum likelihood approaches and Bayesian methods \citep{Veen_2008_estimation, rasmussen_bayesian_2013, chen_2018_direct, ross_bayesian_2021}. 

\subsection{Multivariate Hawkes Process} \label{subsec_Mdim}

The model presented in Equation~\ref{eqn:hawkes_simple} extends to an $M$-dimensional linear Hawkes process incorporating both self-excitation and cross-excitation. This is called a multivariate linear Hawkes process, which is an essential building block to estimate product cannibalisation. Assume that there are $M$ dimensions with event times $Y_1 = \left(t_{1\,1} \dots t_{1\,N_1} \right)$ in dimension $1$ to event times $Y_M = \left(t_{M\,1} \dots t_{M\,N_M} \right)$ in dimension $M$. At time $t$ the intensity in dimension $i$ is:
\begin{equation}
    \lambda_i(t) = \mu_i(t) + \sum_{j = 1}^M\sum_{l:  t_{j\,l} < t} K_{ji}\, g_{ji}(t-t_{j\,l}) 
    \label{eqn:hawkes_general}
\end{equation}

We assume the following form for the excitation kernel for all $i,j$: $g_{ij}(x) \geq 0$ for $x \geq 0 $ and $\int_0^\infty g_{ij}(x) \, dx = 1$. Here, we still assume $ K_{ij} \geq 0$, describes the excitation effect an event in dimension $i$ has on dimension $j$. The case of inhibition ($ K_{ij} \leq 0$) is discussed in Section~\ref{subsec_inhibition}.  We write $\mathbf{K} = \{K_{ij}\}$ where $i,j = 1 \dots M$. Note that we do not enforce symmetry or any other structure in $\mathbf{K}$. Here, $\theta_\mu$ contains all parameters from the background rates $\mu_i(\cdot)$, and $\theta_g$ from $g_{ij}(\cdot)$ for all $i,j = 1 \dots M$.

\subsection{Inhibition} \label{subsec_inhibition}

When a $K_{ij} \leq 0$ in Equation~\ref{eqn:hawkes_general}, this is called inhibition. This implies that an event in dimension $i$ decreases the intensity function of dimension $j$, hence making it less likely that an event in dimension $j$ takes place. In our proposed model in Section~\ref{subsec_model} we permit both excitation and inhibition, implying $K_{ij} \leq 1$ for $i,j = 1 \dots M$. This subtle change leads to a variety of additional considerations that need to be made compared to the excitation-only Hawkes process. 

One concern of Hawkes processes with inhibition is the necessity of a non-negative intensity function. Any point process requires, by design, that the  intensity function at every $t \in [0, T_{max}]$ is non-negative. When a $K_{ij}$ in Equation~\ref{eqn:hawkes_general} is negative it is not guaranteed that the intensity always stays non-negative. We discuss two approaches on how to handle this issue in Section~\ref{subsec_nonnegative}, as well as resulting implication for the likelihood in Section~\ref{subsec_integrate_intensity}. In addition, we also found that the commonly used conditions to assess stability were too restrictive when inhibition was present. Section~\ref{sec_stability} describes this issues in detail and proposes an adapted condition that is stronger than the previously used ones. These deliberations are not limited to the product cannibalisation example, but are relevant for all applications of Hawkes processes with inhibition.

\subsubsection{Non-Negative Intensity} \label{subsec_nonnegative}

There are two main approaches in the literature to ensure non-negativity: restricting the parameter space or using a link function to ensure positivity. For the remainder of the paper we choose the link function approach.

\paragraph{Restricting the Parameter Space}

One common, although crude, way to guarantee a non-negative intensity function is to restrict the parameter space based on the observed data set. In a Bayesian framework this can be incorporated into the prior by attributing a non-zero probability to parameter combinations that give a non-negative intensity evaluation everywhere for the data at hand. However, this method has three fundamental shortcomings: potential non-consistency, sampling issues, and data-dependency.

First, it is typical that the intensity function of a Hawkes process exhibits many spikes and drops with short, rapid changes. Excluding all parameter combinations that lead to a decrease below zero \textit{somewhere} would substantially restrict the parameter space, where many parameter combinations are prohibited. This could potentially exclude the true parameters if inhibition is truly present, which would lead to a non-consistent estimation procedure. Second, restricting the parameter space can cause sampling issues as MCMC samplers are prone to boundary artefacts under such confining conditions. Finally, the `permissibility' of a parameter combination depends on the data. A set of parameters causing a non-negative intensity on a data set does not guarantee this property when more data is collected. We therefore do not recommend this approach and instead rely on a link function, as described in the next section.

\paragraph{Link Function} \label{subsec_link}
Since restricting the parameter space to ensure a non-negative intensity function has clear draw backs we instead use a link function $\phi(\cdot)$. This is another common practice and leads to the following intensity:
\begin{equation}
    \lambda_i(t) = \phi \left(\mu_i(t) + \sum_{j = 1}^M\sum_{l: t > t_{j\,l}} K_{ji}\, g_{ji}(t-t_{j\,l}) \right)
    \label{eqn:hawkes_general_link}
\end{equation}

The link function $\phi(\cdot): \mathbb{R} \to \mathbb{R}^+$ ensures a non-negative intensity at every $t$. For example, \citet{mei_neural_2017} use the \textit{softplus} function $\phi(x) = s \, \, log(1 + exp(x/s))$ with parameter $s$. Another straightforward choice of $\phi$ is the \textit{ReLU} function where $\phi(x) = \max(a, x)$ for a small, non-negative $a$. A popular approach in the literature is to set $a=0$ \citep{lemonnier_nonparametric_2014, lu_high-dimensional_2018, costa_renewal_2020}. Crucially, this choice of a link function preservers the interpretation of $K_{ij}$ as the average number of direct offsprings (as described in Section~\ref{subsec_branching}). Hence, for the remainder of this paper we use the \textit{ReLU} function with $\phi(x) = \max(0, x)$.


\subsection{Branching Interpretation} \label{subsec_branching}

We now examine how data from a one-dimensional linear Hawkes process can be related to an underlying branching structure as we utilise this concept in Section~\ref{sec_offsprings} to motivate a new parametrisation and subsequent prior specification for product cannibalisation. The extension to multivariate Hawkes processes is straightforward and, for example, discussed in \citet{embrechts_multivariate_2011}. Here we consider the common branching structure interpretation when $K > 0$. 

The model in Equation~\ref{eqn:hawkes_simple} can be written as the superposition of Poisson processes \citep{hawkes_cluster_1974} such that the intensity function is a sum of independent Poisson processes. Suppose that $j$ events $(t_1 \dots t_j)$ happened before time $t$. Then the intensity at time $t$ is the sum of the background process $\mu(t)$ and $j$ offspring processes with intensities $K \, g(t-t_j)$, where each offspring process was \textit{triggered} by a previous event. This gives rise to the following branching structure interpretation of a Hawkes process \citep[][p. 202]{daley_introduction_2003}. Data generated from an excitation-only Hawkes process consists of two types of events that come from distinct Poisson processes, where each event is generated by exactly one process \citep{rasmussen_bayesian_2013}:

\begin{enumerate}
    \item \textit{Immigrant events} come from the background process with intensity $\mu(t)$. 
    \item \textit{Offspring events} come from an offspring process which had been triggered by a previous event. Here, each event has an average of $K$ \textit{direct} offsprings if $T_{max} \to \infty$. 
\end{enumerate}

Note that offspring events trigger offspring processes as well. This can lead to \textit{cascades} started by an immigrant event that has offspring events, which, in turn, has offspring events etc. Figure~\ref{fig_branching} visualises this branching structure. This concept will be utilised in Section~\ref{sec_offsprings} to introduce a new parametrisation based on the total number of offsprings.

\begin{figure}[tp]
    \centering
    \includegraphics[width=.8\linewidth]{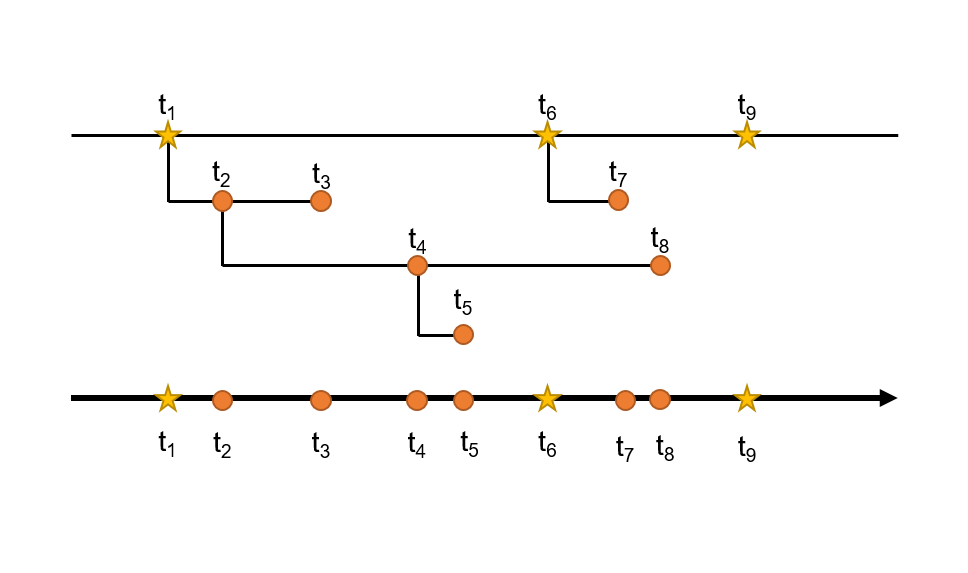}
    \caption{Sketch of the branching structure interpretation in one dimension. In this example $(t_1, t_6, t_9)$ are the immigrant events, indicated by a star on the top and the summarised timeline below. The event $t_1$ has two \textit{direct} offsprings, namely $t_2$ and $t_3$. The former has further offsprings. For example $t_1$ has a total of five offsprings (number of offspring events in the cascade that started in $t_1$), whereas $t_9$ has none.
}
    \label{fig_branching}
\end{figure}

While it is not known whether an events is an immigrant of offspring for a given data set, this perspective nevertheless can be exploited for sampling, inference, and interpretation \citep{rasmussen_bayesian_2013,  ross_bayesian_2021}. These interpretations hold asymptotically for $T_{max} \to \infty$. For finite $T_{max}$ some offspring events may be larger than $T_{max}$ and therefore would not be included in a simulated data set. Such edge effects are common in the Hawkes literature and diminish when $T_{max}$ is large \citep[][p. 275]{daley_introduction_2003}. 

This interpretation extends to the excitation multivariate case (all entries of $\mathbf{K}$ non-negative) where an event in dimension $i$ triggers offspring processes in all dimensions $j$. In turn, resulting offspring events can trigger offspring processes in all dimensions. Hence, cascades can potentially lead through different dimensions, which all contribute to the total number of offsprings of the immigrant event that started the cascade, which is discussed in Section~\ref{sec_offsprings}.


\section{Hawkes Processes for Product Cannibalisation} \label{sec_hawkes_for_product}

We are now at the point to define the model we use to estimate product cannibalisation in wholesale. This section contains a description of our modelling approach and its intensity function. In addition, we give details on the choice of the background rate $\mu_i(\cdot)$ and the influence kernel $g_{ij}(\cdot)$.

\subsection{Model} \label{subsec_model}
To estimate product cannibalisation we use the multivariate Hawkes process such that each article $i = 1 \dots M$ is represented by a dimension  $i = 1 \dots M$. Whenever an order for article $i$ is placed we record an event in dimension $i$. Based on this data we use the following intensity function for our model:
\begin{equation}
    \lambda_i(t) = \left[\mu_i(t) + \sum_{j = 1}^M\sum_{l: t > t_{j\,l}} \mathbf{K}_{ji}\, g_{ji}(t-t_{j\,l}) \right]_+
    \label{eqn:hawkes_intensity_model}
\end{equation}
where $\left[ \cdot \right]_+$ indicates that any negative evaluation inside the brackets is set to zero, as discussed above in  our choice of link function.

The only restriction we place on the influence is that $K_{ji} < 1$, which means that both excitation and inhibition are permitted in this model. If the interaction $K_{ji}$ for $i \neq j$ is negative we interpret this inhibition as product cannibalisation as the occurrence of an event in dimension $i$ (article $i$ is bought) makes it less likely that an event happens in dimension $j$ (article $j$ is ordered). 

Two parts of the intensity function in Equation~\ref{eqn:hawkes_intensity_model} still need to be defined, $\mu_i(\cdot)$ and $g_{ij}(\cdot)$. The subsequent sections provide  details on both the background rate and the influence kernel to conclude the definition of our model.

\subsection{Background Rate} \label{subsec_model_backgr}
The choice of background rate $\mu_i(\cdot)$ for Hawkes processes is often flexible and application-specific. For example, \citet{mohler2013modeling} use a Log-Gaussian Cox process, \citet{molkenthin_gp-etas_2022} utilise a Gaussian Process to represent the background rate, and \citet{kolev_semiparametric_2020} employ a Dirichlet process. As the data described in Section~\ref{sec_data} displays distinct seasonal variability this needs to be taken into account in our estimation procedure through an adapted background rate. We choose for $i = 1 \dots M$
\begin{equation}
    \mu_i(t) = c_i \, b(t)
\end{equation}
where $b(\cdot)$ is accounting for general seasonality. The article-specific scaling parameter $c_i$ is flexible enough while while remaining computationally cheap. We use the following parametric form for seasonal part of the background rate.
\begin{equation}
    b(t) = 
    \begin{cases}
    \left[ 
    \varphi_1 \mathbbm{1}_{\text{Mon}}(t) + \dots + \varphi_7 \mathbbm{1}_{\text{Sun}}(t)
    \right]  \left[ 
    \varphi_8 \mathbbm{1}_{\text{Jan}}(t) + \dots +  \varphi_{19} \mathbbm{1}_{\text{Dec}}(t)
    \right] ,  & \text{if } \mathbbm{1}_{\text{Christmas}}(t) = 0\\
    \varphi_{20},              & \text{if } \mathbbm{1}_{\text{Christmas}}(t) = 1
\end{cases}
\end{equation}

This describes a multiplicative effect between day of the week ($\varphi_1 \dots \varphi_7$) with the month ($\varphi_8 \dots \varphi_{19}$) outside of the Christmas period ($24^{th}$ till $27^{th}$ of December), and a constant rate $\varphi_{20}$ during the Christmas period.

\subsection{Influence Kernel} \label{subsec_model_Influence_Kernel}
The options for influence kernels are equally broad. In some instances they are highly problem-specific, for example \citet{browning_simple_2021} use a histogram kernel for Covid modelling. A popular choice of the influence kernel is the exponential kernel, which performs well in many examples \citep[e.g. ][]{blundell_modelling_2012, Shelton}. For the influence kernels we utilise this exponential kernel
\begin{equation}
    g_{ij}(x) = \beta_{ij} \, exp \left( -\beta_{ij} \, x \right)
    \label{eqn:hawkes_influence kernel}
\end{equation}
with $\beta_{ij} > 0$ for $i,j = 1 \dots M$. Additionally, we assume that all $\beta_{ii} = \beta_{\text{diag}}$ and $\beta_{ij} = \beta_{\text{off}}$ when $i \neq j$. Hence, the values for $\beta_{ij}$ are the same for all self-influences, as well as cross-influences, respectively.






\section{Estimation} \label{sec_estimation}

To estimate this model from Section~\ref{sec_hawkes_for_product} we use a Bayesian approach incorporating the priors that will be discussed in Section~\ref{sec_priorchoice}. We utilise standard Stan \citep{Stan_RStan_2019} to obtain posterior samples and analyse the posterior distribution through density plots and summary statistics generated from these samples. The following sections state the likelihood and discuss challenges arising from the need to integrate the intensity function when evaluation the likelihood. We describe different proposed solutions for this from the literature, as well as the approximation used going forward. Finally, we specify the estimation of the background rate. 

\subsection{Likelihood} \label{subsec_likelihood}

For any multivariate point process the likelihood for data $Y$ is
\begin{equation}
    p(Y | \Theta) = \prod_{m = 1}^M \prod_{i = 1}^{N_m} \lambda_m(t_i | \Theta ) \, exp\left(-  \Lambda \right)
 \label{eqn:likelihood}
\end{equation}
 \citep[][p. 23]{daley_introduction_2003} where $\Lambda = \sum_{m = 1}^M \int_0^{T_{max}} \lambda_m(x | \Theta ) \, dx$ and $\Theta = (\theta_\mu, \mathbf{K}, \theta_g)$.  Note that the evaluation of both the likelihood and the log-likelihood require $\Lambda$, which is the integral of the intensity function. While this is straightforward task in an excitation-only Hawkes process, more deliberation is required for the inhibition-encompassing model introduced in Section~\ref{subsec_model}. The following section reviews the occurring issues and presents an exact solution for a special case of the parameter structure (details in Appendix~\ref{appendix_roots}), and an approximate one applicable in the general case (with additional material in Appendix~\ref{appendix_integral}).

\subsubsection{Integrating the Intensity} \label{subsec_integrate_intensity}

Many parameter estimation methods, such as our subsequent Bayesian procedures, need to evaluate the likelihood, which requires $\Lambda$.  We now discuss different approaches from the literature, present exact solutions for particular choices of the influence kernel, and suggest an approximate solution that will be subsequently utilised. 

When all $K_{ij}$ are non-negative and $\phi(x) = x$, the integral of the intensity can be computed by integrating each segment between events (as well as the ones between $0$ and the first event and between the last event and $T_{max}$). This reduces to an easy computation of the integral that sums over the background rate and the contributions of each event \citep{ogata_lewis_1981}. However, it becomes difficult to integrate the intensity function under inhibition in this scenario as the intensity might drop below zero. \citet{lu_high-dimensional_2018} choose to calculate the integral as in the excitation-only case, even though the parts of the integral below zero contributing negatively to the integral.

The use of a link function $\phi(\cdot)$ guarantees a non-negative intensity function (see Section~\ref{subsec_link}). However, any choice of link that is not the identity prohibits a straightforward calculation of the integral as the approach by \citet{ogata_lewis_1981} cannot be employed anymore. \citet{mei_neural_2017} explain in their supplementary materials that they approximate $\Lambda$ for a one-dimensional Hawkes process by sampling a single $t^*$ uniformly from $[0, T_{max}]$ and then use $\hat{\Lambda} = T_{max} \lambda(t^*)$. While $\hat{\Lambda}$ is indeed an unbiased estimator for $\Lambda$ it has a large variance. The intensity functions often exhibit rapid ups and downs, such that just evaluating it once cannot capture all aspects of the function. \citet{ertekin_reactive_2015} use Approximate Bayesian Computation to circumvent the problem as this method does not require an evaluation of the likelihood and therefore the integral is not required.

Another possibility is to identify the intervals of the intensity function that are non-negative and integrate only those \citep{bonnet_maximum_2021}. While this is an exact solution, this approach requires finding the roots of the intensity function, which depend on the exact specifications of the model. In Appendix~\ref{appendix_roots_general} we show that the roots of the intensity function with the exponential kernel are the solutions to a high-order polynomial. The computational complexity of this root finding approach is high since the root-finding needs to be employed between each event time in every dimension. When setting all $\beta_{ij}$ to be equal, we find a simple expression for the roots when using the exponential kernel, as shown in Appendix~\ref{appendix_roots_beta}. However, this assumption may be too restrictive in application. 

Going forward we use a numerical approximation of $\Lambda$ without placing any restrictions on $\beta_{ij}$. In each segment between events in every dimension we use a cubic Simpson's rule approximation. While this is a costly approximation in terms of computational complexity, any lower level approximation introduced too much bias. For details see Appendix~\ref{appendix_integral}.

\subsection{Background Rate} \label{subsec_est_backgr}

We now consider the estimation of the background rate $\mu_i(\cdot)$ for $i = 1 \dots M$. As discussed in Section~\ref{subsec_model_backgr} the background rate consist of two parts
\begin{equation}
    \mu_i(t) = c_i \, b(t)
\end{equation}
where $c_i$ is a product specific scaling parameter and $b(\cdot)$ is captures general, seasonal trends. It contains a multiplicative effect between day of the week ($\varphi_1 \dots \varphi_7$) with the month ($\varphi_8 \dots \varphi_{19}$) outside of the Christmas period ($24^{th}$ till $27^{th}$ of December), and a constant rate $\varphi_{20}$ during the Christmas period. 

Estimation of components of the background rate can either be estimated jointly with all other model parameters \citep{mohler2013modeling, kolev_semiparametric_2020, molkenthin_gp-etas_2022} or outside of the model \citep{helmstetter_comparison_2006}. For our approach we choose to estimate the product-specific scaling parameter $c_i$ within Stan and provide the seasonal component $b(\cdot)$ as a plug-in estimate which is produced beforehand outside of Stan. For this we considering all articles in the respective category, for example all products in product class A for the first example in Section~\ref{sec_app_M2_A}. We take a sample of events without replacement from each article to ensure that articles with more sales do not dominate the estimation and then employ an optimiser in \textsf{R} on this data set to obtain the maximum likelihood estimation for $(\varphi_1 \dots \varphi_{20})$. These are then used to construct the plug-in estimate $b(\cdot)$, which is passed to Stan.

\section{Prior Choice} \label{sec_priorchoice}

To implement the proposed model from Section~\ref{subsec_model} for product cannibalisation in a Bayesian framework we investigate prior choices for $\mu_i(\cdot)$, $\mathbf{K}$, and $\beta$ (using an exponential kernel). 

\subsection{Prior for $c_i$} \label{subsec_priors_mu}

For each dimension $i= 1 \dots M$ we use $c_i>0$ with prior
\begin{align}
     c_i & \sim \mathcal{N}(0,3) & \text{ for } i = 1 \dots M 
\end{align}
that scales the plug-in estimates accordingly as the sale volumes may differ between articles.

\subsection{Prior for $\mathbf{K}^{*}$} \label{sec_priors_Kstar}

Throughout the literature Hawkes processes are parameterised in terms of $\mathbf{K}$. For example, \citet{browning_simple_2021} choose a uniform prior for each of its entries. However, when the dimension $M$ is large the non-negative entries of $\mathbf{K}$ have to be smaller in order to retain stability in accordance the criteria outlined in Section~\ref{sec_stability}. For example, when all entries of $\mathbf{K}$ are $0.4$, a two-dimensional process is stable, whereas a three-dimensional $\mathbf{K}$ has an eigenvalue larger than 1 and hence is not stable. Priors on $\mathbf{K}$ in a Bayesian framework would therefore have to be adapted according to the dimension $M$. Instead, we suggest to reparametrise the model to circumvent this problem using the total number of offsprings, which does not suffer from this dimension-dependency, and use this notion to place priors.

\subsubsection{Reparametrise the Model Using the Total Number of Offsprings} \label{sec_offsprings}
In this section we formally define the total number of offsprings, which is then used for the prior elicitation. For now, we assume that all entries of $\mathbf{K}$ are non-negative. Section~\ref{subsec_branching} introduces the notion of direct offspring, i.e. an event in dimension $j$ from the immigrant process of an event in dimension $i$. However, when estimating the parameters it can become difficult to distinguish between and event in $i$ triggering an immigrant process in dimension $j$ (`$i \to j$') and an event in $i$ triggering an immigrant process in dimension $k$, which produces an event that, in turn, triggers an immigrant process in $j$ (`$i \to k \to j$`). This problem is perpetuated further in higher dimensions \citep{eichler_causal_2013}. We therefore propose to investigate the total number of offsprings that circumvents this issue. For this, two kinds of events are define as an indirect offspring of event $t_i$:
\begin{enumerate}
    \item an event from a process that was triggered by a direct offspring of $t_i$
    \item an event from a process that was triggered by an indirect offspring of $t_i$
\end{enumerate}
We define $K^*_{ij}$ as the total number of offsprings an event in dimension $i$ has in dimension $j$, which is calculated as the sum of direct and indirect offsprings in dimension $j$. We write $\mathbf{K}^{*} = \{ K^*_{ij} \}$ were $i,j = 1 \dots M$.

\begin{theorem}
The total number of offsprings is 
\begin{equation}
    \mathbf{K}^{*} = \left(I - \mathbf{K} \right) ^{-1} - I   \nonumber
\end{equation}
\end{theorem}

The proof is available in Appendix~\ref{appendix_offsprings}. It is important to note that the idea and definition of $\mathbf{K}^{*}$ have already been presented in the literature, but not in a unified concept towards use in application. On one hand, \citet{bacry_2016_first} introduce this formula without its interpretation in the context of a matrix convolution. On the other hand, \citet{bacry_estimation_2016} define the concept of total offsprings but do not provide a closed form expression. Crucially, neither of them make extensive use of the concept, in particular not for prior elicitation. 

If a process is stable (see Section~\ref{sec_stability}), the calculation remains the same when entries of $\mathbf{K}$ are negative and $\mathbf{K}^{*}$ still provides meaningful interpretation. While positive entries of $\mathbf{K}^{*}$ describe the average number of total offsprings, a negative entry summarises the \textit{negative contributions} to the intensity function across dimensions. The number of actually inhibited events depends on the number of events in the process. Nevertheless, $\mathbf{K}^{*}$ retains its attractive interpretation and can be used to place priors without having to consider the dimensions $M$.

Crucially, the entries of $\mathbf{K}^{*}$ for a stable process are not dependent on the dimension $M$. We therefore reparameterise the multivariate Hawkes process in terms of $\mathbf{K}^{*}$ such that the intensity for dimension $i$ is written as: 

\begin{equation}
    \lambda_i(t) = \phi \left(\mu_i + \sum_{j = 1}^M\sum_{l: t > t_{j\,l}} \left\{ f(\mathbf{K}^*) \right\}_{ji}\, g_{ji}(t-t_{j\,l}) \right)
    \label{eqn:hawkes_lik_Kstar}
\end{equation}

where $f(\mathbf{X}) = I - (\mathbf{X} - I)^{-1}$.

\subsubsection{Normal Priors}  \label{subsec_normalpriors}
This reparameterisation in Equation~\ref{eqn:hawkes_lik_Kstar} permits us to use priors directly for $\mathbf{K}^{*}$ as these values do not depend on the dimension $M$. We restrict the parameter space of $\mathbf{K}^{*}$ such that only stable parameters (according to \ref{c3new} from Section~\ref{sec_stability}) are allowed. We place independent normal priors on each entry of $\mathbf{K}^{*}$:
\begin{align}
     K^*_{ij} & \sim \mathcal{N}(0,0.5) & \text{ for } i,j = 1 \dots M & \text{ (stable only) }
\end{align}
Note that we do not enforce symmetry or any other structure in $\mathbf{K}^{*}$.

\subsection{Prior for $\beta$} \label{subsec_priors_beta}

For the influence kernels we utilise the popular exponential kernel. As discussed in Section~\ref{subsec_model_Influence_Kernel}, we assume that all $\beta_{ii} = \beta_{\text{diag}}$ and $\beta_{ij} = \beta_{\text{off}}$ when $i \neq j$. Hence, the values for $\beta$ are the same for all self-influences, as well as cross-influences, respectively. For $\beta_{\text{diag}}$ and $\beta_{\text{off}}$ we choose the following priors.
\begin{align}
    \beta_{\text{diag}} &  \sim \mathcal{U}(0.05, 0.5) & & \\
    \beta_{\text{off}} & \sim \mathcal{U}(0.05, 0.5) & & 
\end{align}
As the data is measured in days, the lower bound of the prior ensures that influence of an event in dimension $i$ onto dimension $j$ is not too far in the future. For example for $ \beta_{\text{diag}} = 0.1$, the median of the exponential distribution is approximately $7$, which means that half of the influence of an event happens within a week of it. The upper bound of the distribution makes sure that the influence kernel is at least somewhat spread out and not concentrated immediately after an event.

\section{Stability} \label{sec_stability}

When using Hawkes processes one usually assumes, either explicitly or implicitly, \textit{stability}. This ensures that there are not infinitely many events happening. In a Bayesian statistics this can be easily accomplished by restricting the prior space. However, we found that the two stability conditions currently used in the literature are unnecessarily restrictive when inhibition is present. We therefore propose a new condition which is stronger than the two existing ones and use it in the prior specification (see Section~\ref{sec_priors_Kstar}). This section contains the derivation of our condition and compares it to the two existing ones. 

Due to the self-exciting behaviour of the Hawkes process it is possible that an infinite number of events take place in finite time. For example, this can happen in a one-dimensional Hawkes-process if each event has more on average one or more offsprings. This behaviour is called \textit{supercritical} \citep{helmstetter_subcritical_2002} or \textit{explosive} \citep{browning_simple_2021}. However, for real-life applications it can be desirable to limit the parameter space to non-explosive instances \citep{Kolev2019}. This is referred to as \textit{stability} \citep{bremaud_stability_1996, bacry_2015_sparse}.

\begin{definition}
A Hawkes process is stable if there exists a unique stationary distribution of the process with finite average intensity \citep{bremaud_stability_1996, sulem_bayesian_2021}.
\end{definition}

Two conditions (\ref{c1eigen}, \ref{c2colsum} defined below) have been used in the literature to determine stability, as outlines in Section~\ref{subsec_stability_literature}. In Section~\ref{subsec_newcond} we introduce a new condition that is stronger than both of the previously used ones. We prove that when at least one of (\ref{c1eigen}, \ref{c2colsum}) hold, so does our condition. Moreover, there exist parameters $\mathbf{K}$ for which our condition holds when neither of (\ref{c1eigen}, \ref{c2colsum}) do. This permits us to have a unified approach for checking stability and to classify a larger set of parameters as stable. We also provide a toy-example to illustrate this usefulness of our suggested condition.

We introduce the following notations: $\abs(\mathbf{A})$ is an $M \times M$ matrix where each entry is $|A_{ij}| $, the absolute value of $A_{ij}$. Moreover, $\mathbf{A}^+$ is the matrix with entries $\max(A_{ij},0)$. We write $\rho(\mathbf{A})$ for the spectral radius of matrix $\mathbf{A}$, i.e. the largest absolute eigenvalue of matrix $\mathbf{A}$.  

\subsection{Stability Conditions in the Literature} \label{subsec_stability_literature}

To start, we state two conditions to assess stability that have been introduced in the literature:
\begin{enumerate}
    \item[\textbf{C1}]  A Hawkes process is stable if $\rho(\abs(\mathbf{K})) <1$ \citep{bremaud_stability_1996}.  \customlabel{c1eigen}{\textbf{C1}}
    \item[\textbf{C2}] A Hawkes process is stable if $max_j \sum_{i = 1}^M K^+_{ij} < 1$ \citep{sulem_bayesian_2021}. \customlabel{c2colsum}{\textbf{C2}}
\end{enumerate}

Note that \ref{c1eigen} uses the spectral radius of the absolute value matrix, whereas \ref{c2colsum} utilises only the positive part of the matrix $\mathbf{K}$. Both conditions are sufficient, but not necessary. \citet{sulem_bayesian_2021} discuss both \ref{c1eigen} and \ref{c2colsum}, but do not compare them as neither is stronger than the other. As \ref{c1eigen} uses the absolute value matrix, negative entries (inhibition) are converted into excitation. Hence, while the intensity is decreased by certain events, stability is check as if those events increased the intensity. Of course, this procedure is sufficient, but the intensity $\lambda_m(\cdot | \abs(\mathbf{K}))$ may be rather different than the original $\lambda_m(\cdot | \mathbf{K})$, in particular when strong inhibition is present. Hence, we found there to be scope to develop a new criterion that is more closely tailored to processes that contain inhibition.

\subsection{Introducing a New Condition} \label{subsec_newcond}

We propose a new condition to assess stability for a given parameter $\mathbf{K}$:

\begin{theorem}[\textbf{C3}]\label{theorem_eigenplus} 
If $\rho(\mathbf{K}^+) < 1$, then the process with intensities $\lambda_m(\cdot \, | \, \mathbf{K}$), $m = 1 \dots M$, is stable.  \customlabel{c3new}{\textbf{C3}}
\end{theorem}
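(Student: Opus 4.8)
The plan is to exploit the specific structure of the ReLU link $\phi(x)=\max(0,x)$ together with the non-negativity of the background rates $\mu_i\ge 0$ in order to dominate the inhibited process \emph{pathwise} by a purely excitatory (linear) Hawkes process whose direct-offspring mean matrix is exactly $\mathbf{K}^+$. The classical theory of linear multivariate Hawkes processes then supplies stability of this dominating process precisely when $\rho(\mathbf{K}^+)<1$, and the domination transfers that stability to $\lambda_m(\cdot\mid\mathbf{K})$. This is the natural route because the standard Lipschitz comparison of \citet{bremaud_stability_1996} applied to the $1$-Lipschitz ReLU only recovers \ref{c1eigen}, so a sharper, structure-aware argument is needed to reach \ref{c3new}.

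First I would record the key pointwise inequality. For any past configuration, since $K_{ji}\le K_{ji}^+$ and $g_{ji}\ge 0$, the right-hand side below dominates and is itself $\ge\mu_i\ge 0$, so the $\max$ is absorbed:
\begin{equation*}
\lambda_i(t\mid\mathbf{K})=\max\!\Big(0,\ \mu_i+\sum_{j=1}^M\sum_{l:\,t>t_{j\,l}}K_{ji}\,g_{ji}(t-t_{j\,l})\Big)\ \le\ \mu_i+\sum_{j=1}^M\sum_{l:\,t>t_{j\,l}}K_{ji}^+\,g_{ji}(t-t_{j\,l}).
\end{equation*}
To turn this into genuine stochastic domination I would realise the inhibited point process $N$ and a linear point process $\bar N$ (kernel $\mathbf{K}^+$) on a common space via the Poisson embedding: a candidate point at $(t,z)$ is retained in dimension $i$ for $N$ if $z\le\lambda_i(t^-\mid\mathbf{K})$ and for $\bar N$ if $z\le\bar\lambda_i(t^-)$, where $\bar\lambda_i$ is the right-hand side above computed along $\bar N$. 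Arguing by induction over the locally finite ordered points, I would show that if every point of $N$ up to $t$ is also a point of $\bar N$, then the displayed inequality together with $K_{ji}^+g_{ji}\ge 0$ gives $\lambda_i(t^-\mid\mathbf{K})\le\bar\lambda_i(t^-)$, so every point retained by $N$ is retained by $\bar N$; hence $N_i([0,t])\le\bar N_i([0,t])$ almost surely for all $i,t$. The care needed here is that the inhibited intensity is \emph{not} monotone in its own history — adding points can lower it — so a direct monotone coupling fails, and the argument succeeds only because the comparison is made against the $\mathbf{K}^+$ process, on which the ReLU never binds.

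Finally I would invoke the cluster (Hawkes--Oakes) representation for the linear process $\bar N$. Its direct-offspring mean matrix has entries $\int_0^\infty K_{ji}^+\,g_{ji}(x)\,dx=K_{ji}^+$, so $\bar N$ admits a unique stationary version with finite mean intensity exactly when $\rho(\mathbf{K}^+)<1$, which is the hypothesis of \ref{c3new}. The pathwise domination then yields $\mathbb{E}[N_i([0,t])]\le\mathbb{E}[\bar N_i([0,t])]<\infty$, so $\lambda_m(\cdot\mid\mathbf{K})$ is non-explosive with finite average intensity. To upgrade this to existence and uniqueness of a stationary distribution I would run the same embedding on the whole line: finiteness of the dominating stationary clusters guarantees the nonlinear recursion for $N$ is well defined from the past, producing a stationary version, and a coupling-from-the-past argument (two solutions fed the same Poisson measure coincide once a common empty interval resets both histories) supplies uniqueness.

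I expect the main obstacle to be precisely this last upgrade from a finite mean-intensity bound to a genuine unique stationary law. The induction propagating the pathwise domination is essentially routine once the pointwise inequality is in hand, but the whole-line construction and the coupling that forces uniqueness require the nonlinear, inhibitory recursion to forget its initial condition in finite time — a loss-of-memory property that does not follow from domination alone and is where the bulk of the technical effort will concentrate.
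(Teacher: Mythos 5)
Your proposal is correct and follows the same overall strategy as the paper: dominate the inhibited process by the excitation-only process with kernel matrix $\mathbf{K}^+$, note that the ReLU is absorbed on that dominating process so classical linear Hawkes theory gives stability under $\rho(\mathbf{K}^+)<1$, and transfer stability downward. The difference lies in how the domination step is executed, and here your route is genuinely distinct and stronger. The paper's argument rests on Lemma~\ref{lemma_thinning}, which compares two processes via the bare pointwise statement $\bar{\lambda}(x)\geq\lambda(x)$ for all $x$; for Hawkes processes this is delicate, since each intensity is a functional of its \emph{own} random history, so the inequality is only meaningful once the two processes are realised on a common probability space with comparable histories. You supply exactly that missing structure: the Poisson-embedding coupling plus the induction showing $N\subseteq\bar{N}$ pathwise, and you correctly identify why the induction closes (the comparison is made against the $\mathbf{K}^+$ process, where the ReLU never binds and the intensity is monotone in added points, so the non-monotonicity of the inhibited intensity in its own history does no harm). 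You also flag that finite mean intensity alone does not yield the existence and uniqueness of a stationary law that the paper's cited definition of stability demands, and that a construction on the whole line with a loss-of-memory argument is needed; the paper's proof is silent on this point and implicitly equates stability with finite average intensity. In short, your proposal reaches the same conclusion by the same high-level comparison idea, but replaces the paper's informal intensity-comparison lemma with an explicit coupling that makes the domination rigorous, at the cost of the additional (and honestly acknowledged) technical work needed for uniqueness of the stationary version.
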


To prove it, we state an auxiliary lemma.

\begin{lemma} \label{lemma_thinning}
Suppose a process with intensity $\bar{\lambda}(\cdot)$ has finite average intensity. Then a process with intensity $\lambda(\cdot)$ such that $\bar{\lambda}(x) \geq \lambda(x)$ for all $x \geq 0$ also has finite average intensity. 
\end{lemma}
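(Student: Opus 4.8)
The plan is to prove the lemma by a coupling (thinning) argument, exploiting the fact that a point process with a smaller intensity can be realised as a sub-process of one with a larger intensity. First I would record exactly what stability supplies for the dominating process: by the notion adopted in Section~\ref{sec_stability}, a stable $\bar{\lambda}$-process is non-explosive (almost surely only finitely many events occur in any bounded interval $[0,t]$) and has finite average intensity $\mathbb{E}[\bar{\lambda}]<\infty$. These are the two properties that must be transferred to the $\lambda$-process.

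The central step is to construct both processes on a common probability space so that every event of the $\lambda$-process is also an event of the $\bar{\lambda}$-process. I would drive both by a single Poisson random measure $\Pi$ on $[0,\infty)\times[0,\infty)$: a candidate time $x$ carrying height $u$ is retained by the $\bar{\lambda}$-process when $u\le\bar{\lambda}(x)$ and by the $\lambda$-process when $u\le\lambda(x)$, each intensity being evaluated against its own history (Ogata's thinning construction). Because $\lambda(x)\le\bar{\lambda}(x)$ for every $x$ by hypothesis, the acceptance region for $\lambda$ is contained in that for $\bar{\lambda}$, so the retained $\lambda$-points form a subset of the retained $\bar{\lambda}$-points and the counting functions satisfy $N_\lambda([0,t])\le N_{\bar{\lambda}}([0,t])$ for all $t$.

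From the subset property the conclusion is immediate: since $N_{\bar{\lambda}}([0,t])<\infty$ almost surely for every finite $t$, the same holds for $N_\lambda([0,t])$, so the $\lambda$-process is non-explosive; and pathwise domination gives $\mathbb{E}[\lambda]\le\mathbb{E}[\bar{\lambda}]<\infty$, i.e.\ a finite average intensity. I would then note that existence and uniqueness of the stationary distribution for the dominated process follow from the same comparison together with the standard fixed-point characterisation of stability used in the literature \citep{bremaud_stability_1996}, since the dominated dynamics inherit the integrability required there.

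The main obstacle I anticipate is making the coupling rigorous when the two intensities depend on their own, distinct, histories. Under pure excitation the intensity is monotone in the history, so the subset property propagates cleanly by induction on successive event times; under inhibition, however, adding a point can lower a later intensity, and one must verify that the pointwise hypothesis $\bar{\lambda}(x)\ge\lambda(x)$ is genuinely maintained along the coupled trajectory rather than merely assumed at an initial instant. I would therefore treat $\bar{\lambda}(x)\ge\lambda(x)$ as a pathwise hypothesis (exactly as stated) and check that the thinning construction is self-consistent under it, which is precisely the regime in which the lemma is applied to compare $\lambda(\cdot\,|\,\mathbf{K})$ with $\lambda(\cdot\,|\,\mathbf{K}^+)$ in the proof of \ref{c3new}.
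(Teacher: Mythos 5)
Your proposal is correct but takes a genuinely different route from the paper. The paper's own proof is a two-line direct argument: since stability is defined (Section~\ref{sec_stability}) as the existence of a unique stationary distribution with finite average intensity, the authors simply observe that pointwise domination $\bar{\lambda}(x)\geq\lambda(x)$ forces the average intensity of the dominated process to be at most that of the dominating one, hence finite, hence the process is stable. You instead build the comparison from the ground up via a coupling on a common Poisson random measure, deriving the subset property of the retained points, and from it both non-explosion and the bound on the average intensity. What your approach buys is rigour precisely where the paper's proof is thinnest: the claim that pointwise domination of \emph{history-dependent} conditional intensities transfers to a domination of average intensities is exactly the step that requires a coupling argument when the two processes evolve under their own distinct histories, and you correctly flag that under inhibition this self-consistency must be checked along the coupled trajectory rather than assumed. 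Notably, the machinery you invoke is the same thinning-with-branching construction the paper itself deploys later in Appendix~\ref{appendix_branching} (Lemma~\ref{lemma_thinning_sampler} and Algorithm~\ref{alg_branching}), so your version of the proof could be read as the rigorous backbone that the paper's terse argument implicitly relies on. What the paper's approach buys is brevity and a proof that works directly at the level of the stability definition without any probabilistic construction; the cost is that the monotonicity of the average intensity is asserted rather than derived, and non-explosion is not addressed separately.
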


\begin{proof}
The stable process with intensity $\bar{\lambda}(\cdot)$ has a finite average intensity and $\bar{\lambda}(x) \geq \lambda(x)$ holds for all $x \geq 0$. The average intensity of $\lambda(\cdot)$ is therefore at most as large as the average intensity of $\bar{\lambda}(\cdot)$. Hence, the average intensity of $\lambda(x)$ is also finite. 
\end{proof}

Using this lemma we now prove Theorem~\ref{theorem_eigenplus}.

\begin{proof}
Since $K^+_{ij} \geq K_{ij}$ for each $i, j = 1 \dots M$, it follows that $\lambda_m (t \, | \, \mathbf{K}^+) \geq \lambda_m (t \, | \, \mathbf{K)}$ for all $m = 1 \dots M$ and all $t \in [0, T_{max}]$. By Lemma~\ref{lemma_thinning}, if a process with intensity $\lambda_m (\cdot \, | \, \mathbf{K}^+)$ has finite intensity then a process with intensity $\lambda_m (\cdot \, | \, \mathbf{K)}$ has finite intensity as well. A multivariate process has finite intensity if the intensity in each dimension $m = 1 \dots M$ has finite average. If $\rho(\mathbf{K}^+) < 1$ then the multivariate process with intensities $\lambda_m (\cdot \, | \, \mathbf{K}^+)$, $m = 1 \dots M$, has finite intensity, and therefore the multivariate process with intensities $\lambda_m (\cdot \, | \, \mathbf{K)}$, $m = 1 \dots M$, has a finite average intensity as well. Uniqueness is guaranteed by Theorem 6.55 from \citet{liniger_multivariate_2009}, and hence the process  $\lambda_m (\cdot \, | \, \mathbf{K)}$, $m = 1 \dots M$ is stable.
\end{proof}


\subsection{Comparison}

We now compare \ref{c3new} to either of the existing conditions (\ref{c1eigen}, \ref{c2colsum}) and show that if at least one of them holds, so does \ref{c3new}. Moreover, there are examples where only \ref{c3new} holds. This implies that \ref{c3new} can confirm stability for more parameters, which is useful when fitting a multivariate Hawkes process. 

\begin{theorem} \label{therorem_c1_c3}
 When \ref{c1eigen} holds, then \ref{c3new} holds as well.
\end{theorem}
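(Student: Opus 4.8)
The plan is to reduce the claim to a single, well-known monotonicity property of the spectral radius of nonnegative matrices. The key observation is that both $\mathbf{K}^+$ and $\abs(\mathbf{K})$ are entrywise nonnegative: the entries of $\mathbf{K}^+$ are $\max(K_{ij},0)$ and the entries of $\abs(\mathbf{K})$ are $|K_{ij}|$, and for every pair $i,j$ we have $0 \leq \max(K_{ij},0) \leq |K_{ij}|$. Thus, entrywise, $\mathbf{0} \leq \mathbf{K}^+ \leq \abs(\mathbf{K})$.

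The heart of the argument is the following standard consequence of Perron--Frobenius theory: if $A$ and $B$ are nonnegative $M \times M$ matrices with $A \leq B$ entrywise, then $\rho(A) \leq \rho(B)$. First I would state this monotonicity result, citing a standard reference (e.g. Horn and Johnson's matrix analysis text), since it is exactly the tool that converts the entrywise domination into an inequality between spectral radii. Applying it to $A = \mathbf{K}^+$ and $B = \abs(\mathbf{K})$ gives $\rho(\mathbf{K}^+) \leq \rho(\abs(\mathbf{K}))$.

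The conclusion is then immediate: assuming \ref{c1eigen} holds means $\rho(\abs(\mathbf{K})) < 1$, so by the chain of inequalities $\rho(\mathbf{K}^+) \leq \rho(\abs(\mathbf{K})) < 1$, which is precisely the hypothesis of \ref{c3new}. Hence \ref{c3new} holds whenever \ref{c1eigen} does.

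I do not anticipate a genuine obstacle here, as the proof is short once the monotonicity lemma is in place. The only point requiring care is making explicit that the comparison is entrywise and that both matrices are nonnegative, since the monotonicity of the spectral radius can fail for matrices with negative entries; this is exactly why passing through $\abs(\mathbf{K})$ rather than $\mathbf{K}$ itself is essential. If the authors prefer a self-contained treatment, the monotonicity result can alternatively be derived from the Collatz--Wielandt characterisation of $\rho$ for nonnegative matrices, but citing it directly keeps the argument concise.
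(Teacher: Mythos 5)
Your proof is correct and follows essentially the same route as the paper: both arguments rest on the entrywise domination $\mathbf{0} \leq \mathbf{K}^+ \leq \abs(\mathbf{K})$ together with the monotonicity of the spectral radius on nonnegative matrices (the paper cites Serre where you cite Horn and Johnson, and it separates out the trivial all-nonnegative case, but the substance is identical). Your remark that monotonicity can fail without nonnegativity is a useful clarification the paper leaves implicit.
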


We first state the following lemma.
\begin{lemma} \label{lemma_colsum_gelfand}
Let $X$ be a $N \times N$ matrix with entry $X_{ij}$ in row $i$ and colum $j$. Then  $  \rho(X) \leq \max_j \sum_{i = 1}^N X_{ij}$. 
\end{lemma}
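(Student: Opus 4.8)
The plan is to recognise the quantity $\max_j \sum_{i=1}^N X_{ij}$ as a matrix norm and then invoke the classical fact that the spectral radius never exceeds a submultiplicative matrix norm. Concretely, for the non-negative matrices to which this lemma is applied (namely $\abs(\mathbf{K})$ and $\mathbf{K}^+$, whose entries satisfy $X_{ij} \ge 0$) the column sum $\sum_{i=1}^N X_{ij}$ equals $\sum_{i=1}^N |X_{ij}|$, so $\max_j \sum_{i=1}^N X_{ij}$ is exactly the operator norm $\|X\|_1 := \max_j \sum_i |X_{ij}|$ induced by the vector norm $\|v\|_1 = \sum_i |v_i|$. I would flag that the bare inequality $\rho(X) \le \max_j \sum_i X_{ij}$ is only correct under non-negativity (for instance $X = -I$ has $\rho(X)=1$ but negative column sums), so one should either restrict to non-negative $X$ or write $\max_j \sum_i |X_{ij}|$ on the right; in the paper's usage $X$ is always non-negative and the two coincide.

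Given this identification, I would establish $\rho(X) \le \|X\|_1$ directly through an eigenvector. First, choose an eigenvalue $\lambda$ of $X$ with $|\lambda| = \rho(X)$ and a (possibly complex) eigenvector $v \ne 0$, so $Xv = \lambda v$. Second, take $\ell_1$-norms of both sides to get $\|Xv\|_1 = |\lambda|\,\|v\|_1 = \rho(X)\,\|v\|_1$. Third, bound the left-hand side by expanding the matrix-vector product and applying the triangle inequality:
\[
\|Xv\|_1 = \sum_{i=1}^N \Bigl| \sum_{j=1}^N X_{ij} v_j \Bigr| \le \sum_{j=1}^N |v_j| \sum_{i=1}^N |X_{ij}| \le \Bigl(\max_j \sum_{i=1}^N |X_{ij}|\Bigr) \|v\|_1 .
\]
Combining the two displays and dividing by $\|v\|_1 > 0$ gives $\rho(X) \le \max_j \sum_i |X_{ij}| = \max_j \sum_i X_{ij}$ in the non-negative case.

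As an alternative consistent with the lemma's name, I would instead use Gelfand's formula $\rho(X) = \lim_{k\to\infty} \|X^k\|_1^{1/k}$: submultiplicativity of the induced norm gives $\|X^k\|_1 \le \|X\|_1^k$, hence $\|X^k\|_1^{1/k} \le \|X\|_1$ for every $k$, and letting $k \to \infty$ yields $\rho(X) \le \|X\|_1$. Both routes are short; the only genuine care needed is the point raised in the first paragraph — ensuring the entries are non-negative (or inserting absolute values) so that the column sum legitimately equals $\|X\|_1$ — together with remembering that the spectral-radius eigenvector may be complex, which is why the estimate is carried out with $|X_{ij}|$ and $|v_j|$ rather than the raw entries. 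This is the main, and essentially the only, obstacle; the remaining manipulations are routine.
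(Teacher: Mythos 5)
Your proof is correct, and in fact more careful than the paper's: the paper disposes of this lemma in one line by declaring it ``a direct consequence of the Gelfand formula,'' which is precisely your second route ($\|X^k\|_1 \le \|X\|_1^k$ by submultiplicativity, then $\rho(X) = \lim_k \|X^k\|_1^{1/k} \le \|X\|_1$). Your primary route --- picking an eigenvector for the spectral-radius eigenvalue and bounding $\|Xv\|_1$ by the triangle inequality --- is more elementary and self-contained, avoiding the limit formula entirely; either is acceptable. The genuinely valuable addition in your write-up is the observation that the lemma as literally stated is false: for a general real matrix, $\max_j \sum_i X_{ij}$ can be negative (e.g.\ $X = -I$ has $\rho(X) = 1$ but column sums $-1$), so the inequality only holds with $|X_{ij}|$ on the right, or under the standing assumption $X_{ij} \ge 0$. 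The paper states the lemma for an arbitrary $N \times N$ matrix and never records this hypothesis, even though every application (to $\abs(\mathbf{K})$ and $\mathbf{K}^+$) satisfies it; your version should be preferred.
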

This is a direct consequence of the Gelfand formula \citep{gelfand_normierte_1941}.

With that, we can now provide the proof for the stated Theorem~\ref{therorem_c1_c3}.

\begin{proof}
First we compare \ref{c3new} to \ref{c1eigen} when all entries $K_{ij}$ are non-negative (i.e. excitation only). It is trivial to see that when \ref{c1eigen} holds, then \ref{c3new} holds as well since $\abs(\mathbf{K}) = \mathbf{K} = \mathbf{K}^+$ and therefore $\rho(\mathbf{K}^+) < 1$. 

When we do not restrict the entries $K_{ij}$ to non-negative, we note that each entry of $\abs(\mathbf{K})$ is as least as large as the corresponding entry in $ \mathbf{K}^+$. Within the entry-wise positive matrices, the spectral radius is monotonous \citep[Lemma 12, p. 153][]{serre_matrices_2002}. Hence, if $\rho(\abs(\mathbf{K})) < $ then also $\rho(\mathbf{K}^+) < 1$.

Therefore, when \ref{c1eigen} holds, then \ref{c3new} holds as well.
\end{proof}

\begin{theorem} \label{therorem_c2_c3}
 When \ref{c2colsum} holds, then \ref{c3new} holds as well.
\end{theorem}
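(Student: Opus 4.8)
The plan is to deduce \ref{c3new} directly from the column-sum bound on the spectral radius supplied by Lemma~\ref{lemma_colsum_gelfand}. The key observation is that \ref{c2colsum} is a statement about the maximum column sum of $\mathbf{K}^+$, whereas \ref{c3new} is a statement about the spectral radius of $\mathbf{K}^+$, and Lemma~\ref{lemma_colsum_gelfand} relates precisely these two quantities. So the whole theorem should reduce to a single application of that lemma.

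First I would apply Lemma~\ref{lemma_colsum_gelfand} with the choice $X = \mathbf{K}^+$, whose entries are $K^+_{ij} = \max(K_{ij}, 0) \geq 0$. This yields
\[
\rho(\mathbf{K}^+) \leq \max_j \sum_{i=1}^M K^+_{ij}.
\]
Next I would invoke the hypothesis \ref{c2colsum}, which states exactly that $\max_j \sum_{i=1}^M K^+_{ij} < 1$. Chaining the two inequalities gives $\rho(\mathbf{K}^+) < 1$, which is precisely the premise of \ref{c3new}. Hence whenever \ref{c2colsum} holds, so does \ref{c3new}.

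There is essentially no obstacle here: the statement is an immediate corollary of Lemma~\ref{lemma_colsum_gelfand} once one recognises that the column-sum expression bounded by the lemma coincides with the quantity appearing in \ref{c2colsum}. The only point worth noting explicitly is that the lemma applies to $\mathbf{K}^+$ verbatim, so no further structure needs to be checked, and that \ref{c2colsum} is already phrased in terms of $\mathbf{K}^+$ rather than $\mathbf{K}$, which is what makes the reduction clean. Together with Theorem~\ref{therorem_c1_c3}, this shows that \ref{c3new} is implied by each of the two existing conditions.
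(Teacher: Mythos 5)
Your proof is correct and follows essentially the same route as the paper: a single application of Lemma~\ref{lemma_colsum_gelfand} to $\mathbf{K}^+$, chained with the column-sum bound from \ref{c2colsum}. Your version is in fact slightly more careful than the paper's, which writes the sum in terms of $K_{ij}$ rather than $K^+_{ij}$, whereas you make explicit that the lemma is applied to $\mathbf{K}^+$ verbatim.
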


\begin{proof}
By Lemma~\ref{lemma_colsum_gelfand}, if $\max_j \sum_{i = 1}^N K_{ij} <1$ then also $\rho(\mathbf{K}^+) < 1$ and therefore \ref{c3new} holds if \ref{c2colsum} holds.
\end{proof}

Hence, we have shown that if at least one of (\ref{c1eigen}, \ref{c2colsum}) holds, so does \ref{c3new}. In addition, there are examples where neither of the existing conditions could confirm stability, but by using \ref{c3new} we can verify that the process is stable. Let us examine a two-dimensional Hawkes process with
\begin{align*}
    \mathbf{K} = \left( \begin{array}{cc}
        0.5 &  1\\
        -2 & 0.5
    \end{array} \right)
\end{align*}
as an illustrative example. Note that $\rho(\abs(\mathbf{K})) > 1$ and the maximum column sum of $\mathbf{K}^+$ is also larger than $1$, hence neither \ref{c1eigen} nor \ref{c2colsum} hold. Hence, by just using the two existing conditions it is not possible to assess whether a process using $\mathbf{K}$ would be stable. However we can make use of \ref{c3new}, as $\rho(\mathbf{K}^+) < 1$ and confirm that a process using $\mathbf{K}$ is stable. 

In summary, \ref{c3new} not only provides a unified approach to assess stability, it also permits us to determine stability for more parameters than by just using (\ref{c1eigen}, \ref{c2colsum}).


\section{Application} \label{sec_application_real}

This section studies two examples of a multivariate Hawkes process with excitation and inhibition to detect product cannibalisation using the model from Section~\ref{subsec_model} and the priors from Section~\ref{sec_priorchoice}. The first example models two products from product class A, whereas the second one examines how the orders four products in a different product class B interact. For data privacy reasons we cannot disclose the nature of these product classes. For each example we also fit two models without inhibition that serve as comparisons both on the training and test set. 

\subsection{Product Class A, $M=2$} \label{sec_app_M2_A}

For our first example we select two similar products to examine the product cannibalisation between them. Both are similar in their appearance and target audience. Their suggested retail prices are also differs approximately $20\%$.

We use one year (2016-06-14 to 2017-06-13) as training data (a total of $109$ observations, $55$ for Article $1$, $54$ for Article $2$). The following half year (2017-06-14 to 2017-12-13) is used as a test period ($91$ events, of which $60$ from Article $1$ and $31$ from Article $2$). Figure~\ref{fig_M2_observations} displays these events. As discussed in Section \ref{sec_data}, we are dealing with wholesale data, hence all sales are to the same wholesale customer (BusinessGroup).

\begin{figure}[tp]
    \centering
    \includegraphics[width=.99\linewidth]{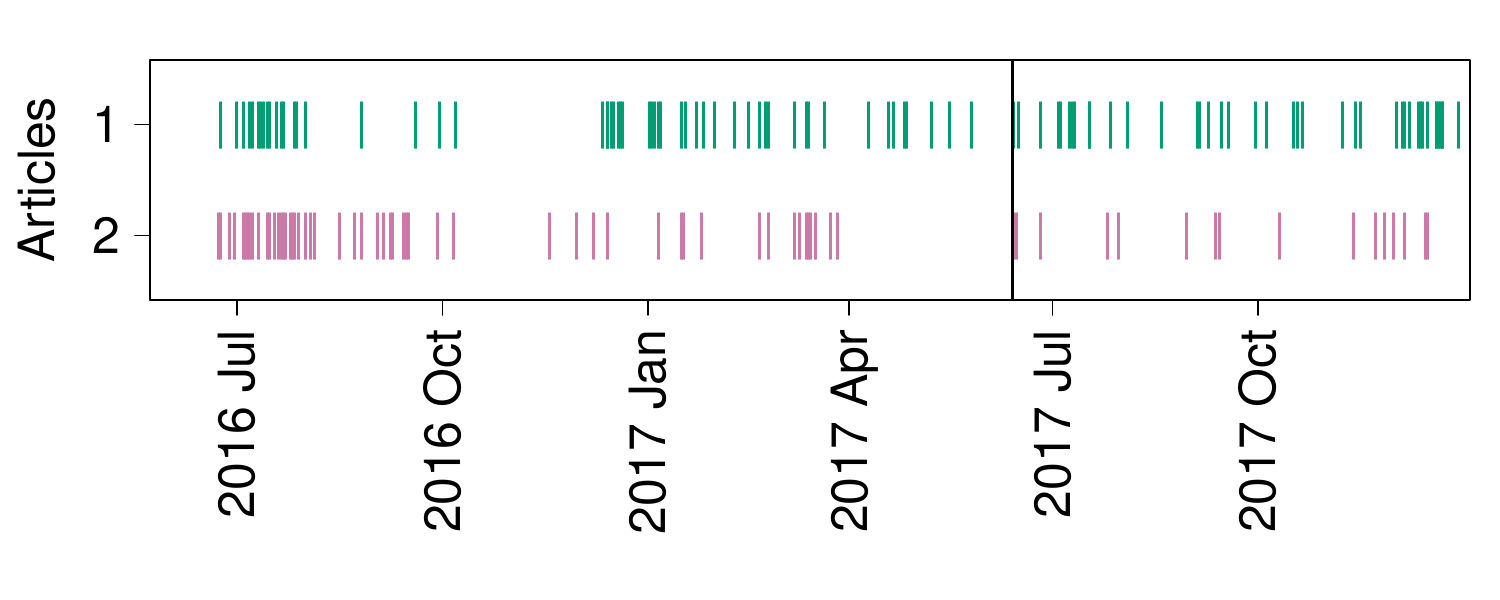}
    \caption{Orders placed for two products from product class A by BusinessGroup. Each vertical bar indicates that on the particular day an order was placed that included the respective article. The vertical black line indicates the split between training and test data.}
    \label{fig_M2_observations}
\end{figure}

As described in Section~\ref{subsec_est_backgr}, the plug-in estimate for $b(\cdot)$ of the background rate is based on all articles in the same product class. This ensures that $b(\cdot)$ only captures large, seasonal trends. Given this plug-in estimate we then use the prior set up from Section~\ref{sec_priorchoice} to obtain posterior distributions for $\Theta$ using Stan \citep{Stan_RStan_2019}. The estimation is carried out on the training set using using normal priors that permit inhibition for $\mathbf{K}^*$ (Model 1). For comparison, we also fit two additional models without inhibition that serve as benchmarks. Model 2 does not allow any inhibition ($0 < K_{ij} <1$ for all $i,j$), whereas Model 3 only uses the background rate, which is equal to $ K_{ij} = 0$ for all $i,j$. Figure~\ref{fig_M2_posterior} showcases the posterior distribution for $\mathbf{K}^*$ for Model 1 and Model 2. Model 3 is not included in the plot as the posterior distributions is simply a point mass at zero for each entry.

\begin{figure}[tp]
    \centering
    \includegraphics[width=.99\linewidth]{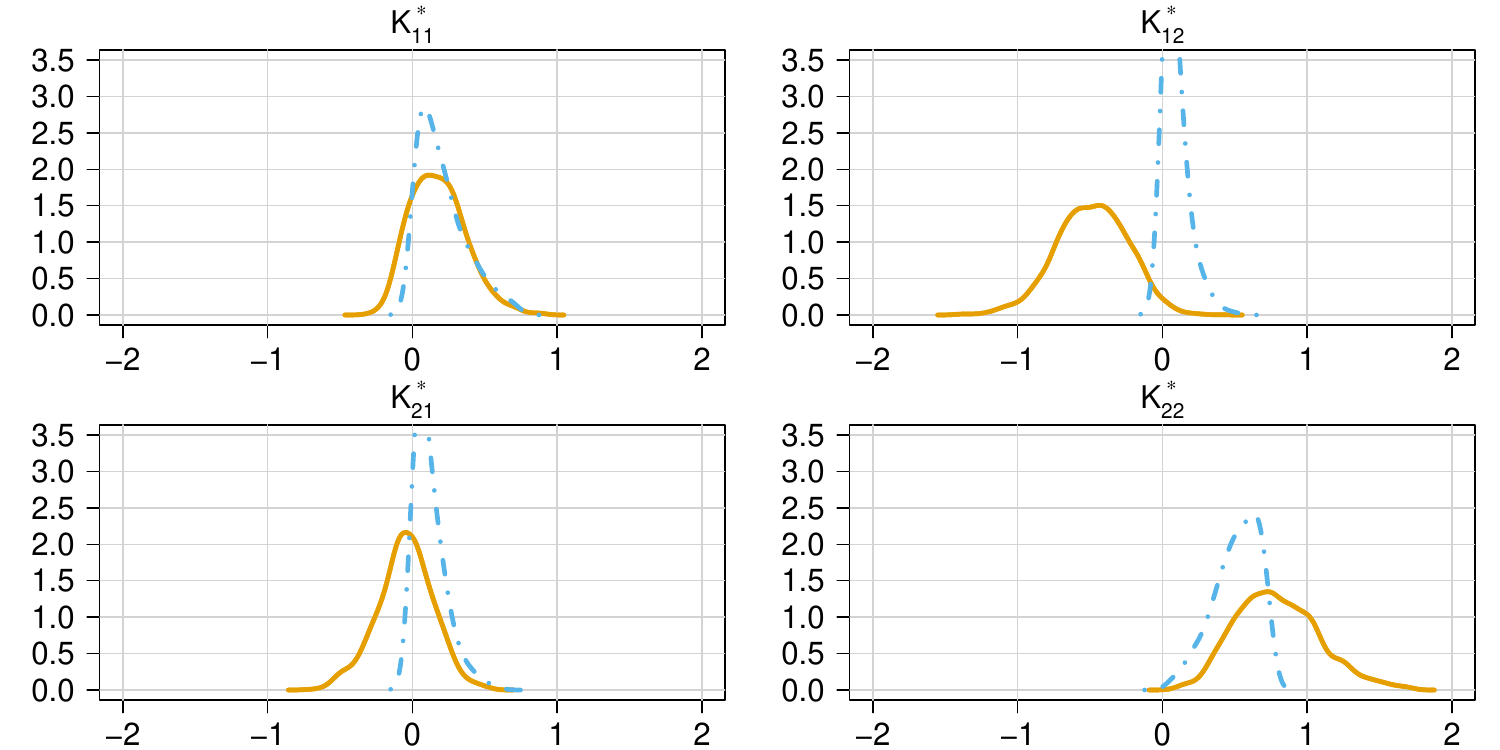}
    \caption{Posterior density estimates for the entries of $\mathbf{K}^*$ based on orders placed for two products Business on the training set. The orange solid line represents the normal priors (Model 1), the blue dot-dashed line is the excitation-only reference model (Model 2). }
    \label{fig_M2_posterior}
\end{figure}

For those model that allows inhibition (Model 1) the parameter $K^*_{12}$ is estimated to be negative. We estimate $K^*_{12} = -0.49$ with a $90\%$- credible interval of ($-0.90, -0.10$). We can interpret this as product cannibalisation in the sense that Article 1 cannibalises sales of Article 2. Interestingly, the other cross-influence parameter $K^*_{21}$ is estimated as (close to) zero. As the two articles are very similar from am appearance perspective, their main difference lies in the suggested retail price. This analysis suggest that the wholesale customer is buying the slightly cheaper article instead of the more expensive one, but not vice versa. 

In addition we can also examine how long the self and cross-influence last. The posterior means for in Model 1 are $\beta_{\text{diag}} = 0.14$ ($0.11,0.49$) and $\beta_{\text{off}}=0.33$ ($0.11,0.49$). This leads to to the influence kernels (for $K = 1$) as showcased in Figure~\ref{fig_M2_beta} for the self and cross-influence. Half of the self-influence takes place in the first five days after the event, whereas this number sits at two days for the cross-influence. This means that the cross-inhibitory effects are most pronounced immediately after an order was placed.

\begin{figure}[tp]
    \centering
    \includegraphics[width=.8\linewidth]{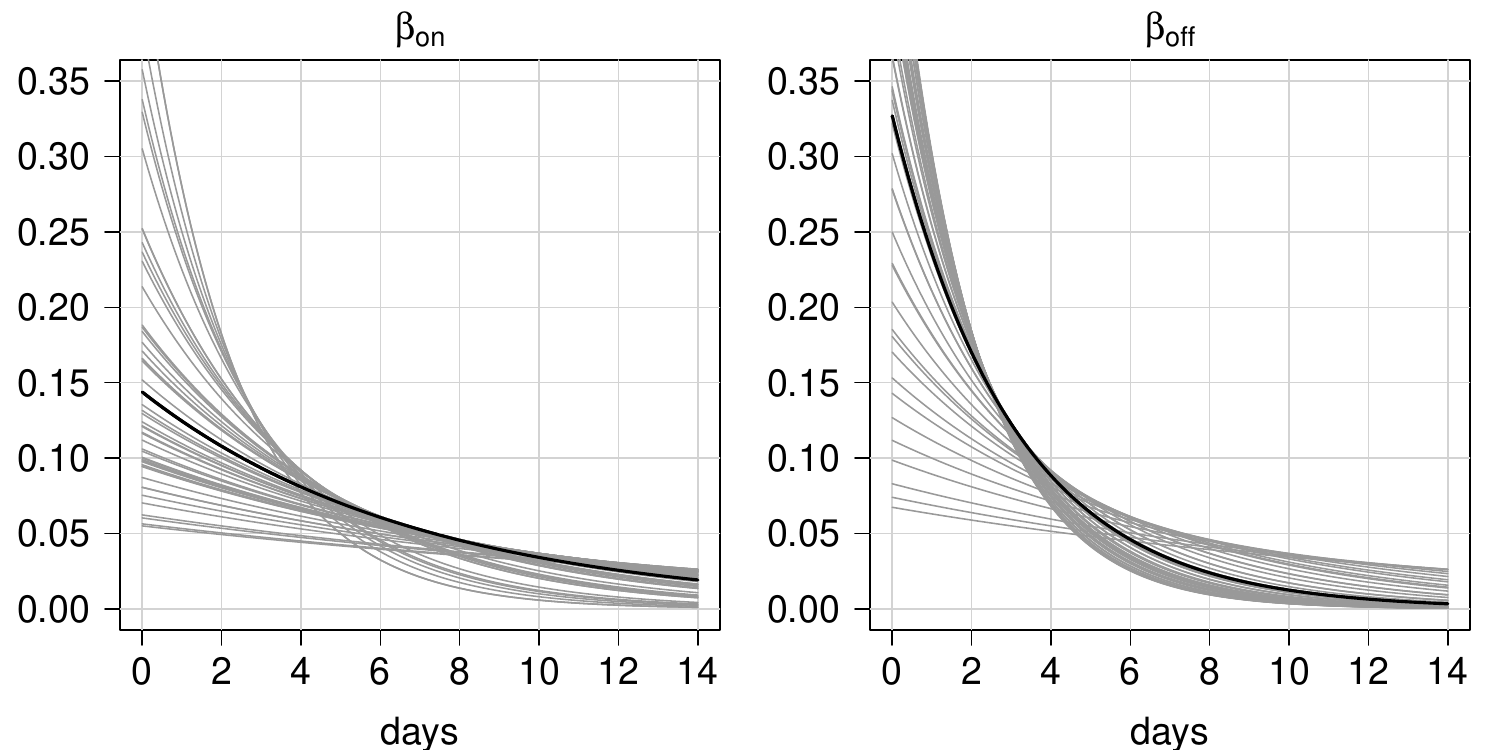}
    \caption{Plots of the influence kernels (self and cross-influence) based on the estimated of Model 1 on the training set for $M=2$ articles. Black line shows influence kernel at posterior mean, the grey lines show influence kernels from $50$ posterior samples. }
    \label{fig_M2_beta}
\end{figure}

Table~\ref{tab_comparisonM2} compares our suggested model incorporating inhibition (Model 1) to two benchmark models which do not allow inhibition. This comparison is done both both within the training set (training set log-likelihood and DIC), as well as the predictive likelihood on the test set. Across all comparisons the inhibition-encompassing Model 1 shows the best performance. Both in the training and test set the models without inhibition (Model 2 and Model 3) give worse outcomes. These results clearly show the need for inhibition and hence product-cannibalisation when modelling the sales process using point processes.

\begin{table}
\caption{\label{tab_comparisonM2} Comparisons of three models for $M=2$.} 
    \centering
    \begin{tabular}{clcccc}
    \hline
         &\textbf{Model} & \textbf{Restriction}   & \textbf{loglik train} & \textbf{DIC} &  \textbf{predictive loglik}  \\
         \hline
        1 & Normal Prior & $K_{ij} <1$ & \textbf{-273.39} & \textbf{557.44} & \textbf{ -169.12} \\
        2 & Excitation only & $0 < K_{ij} <1$  & -278.14 & 562.36 & -180.18  \\
        3 & Background only & $K_{ij} = 0$ &  -282.27 & 568.38 &   -172.34\\
         \hline
    \end{tabular}
  \end{table}

\subsection{Product Class B, $M= 4$} \label{sec_app_M4_B}

Our second example we look at the orders placed by BusinessGroup for four similar products from product class B. Table~\ref{tab_M4_products} gives some characteristics of these articles. We use one year as a training period (2016-06-14 to 2017-06-13) and the consecutive half year as a test period (2017-06-14 to 2017-12-13). The occurrences of events are displayed in Figure \ref{fig_M4_observations} and Table~\ref{tab_M4_products} gives the number of orders per article in the train and test set. 

\begin{table}
\caption{\label{tab_M4_products} Information on four articles used in the example in Section~\ref{sec_app_M4_B}. Columns 2-4 describe the articles. Last column contains the number of orders placed by BusinessGroup from 2017-06-14 to 2017-12-13 for this article in the training (test) period.} 
    \centering
    \begin{tabular}{cccccc}
    \hline
         \textbf{Article} & \textbf{Appearance}  & \textbf{Details} & \textbf{Label} & \textbf{Price} & \textbf{Orders train (test)} \\
         \hline
         1 & dark & white & none & low & 63 (11)  \\
         2 & light & colour & SomeLabel & high & 148 (94) \\
         3 & dark & minimal & SomeLabel & high & 30 (20) \\
         4 & light & minimal & SomeLabel & medium & 105 (7) \\
       
         \hline
    \end{tabular}
  \end{table}

\begin{figure}[tp]
    \centering
    \includegraphics[width=.99\linewidth]{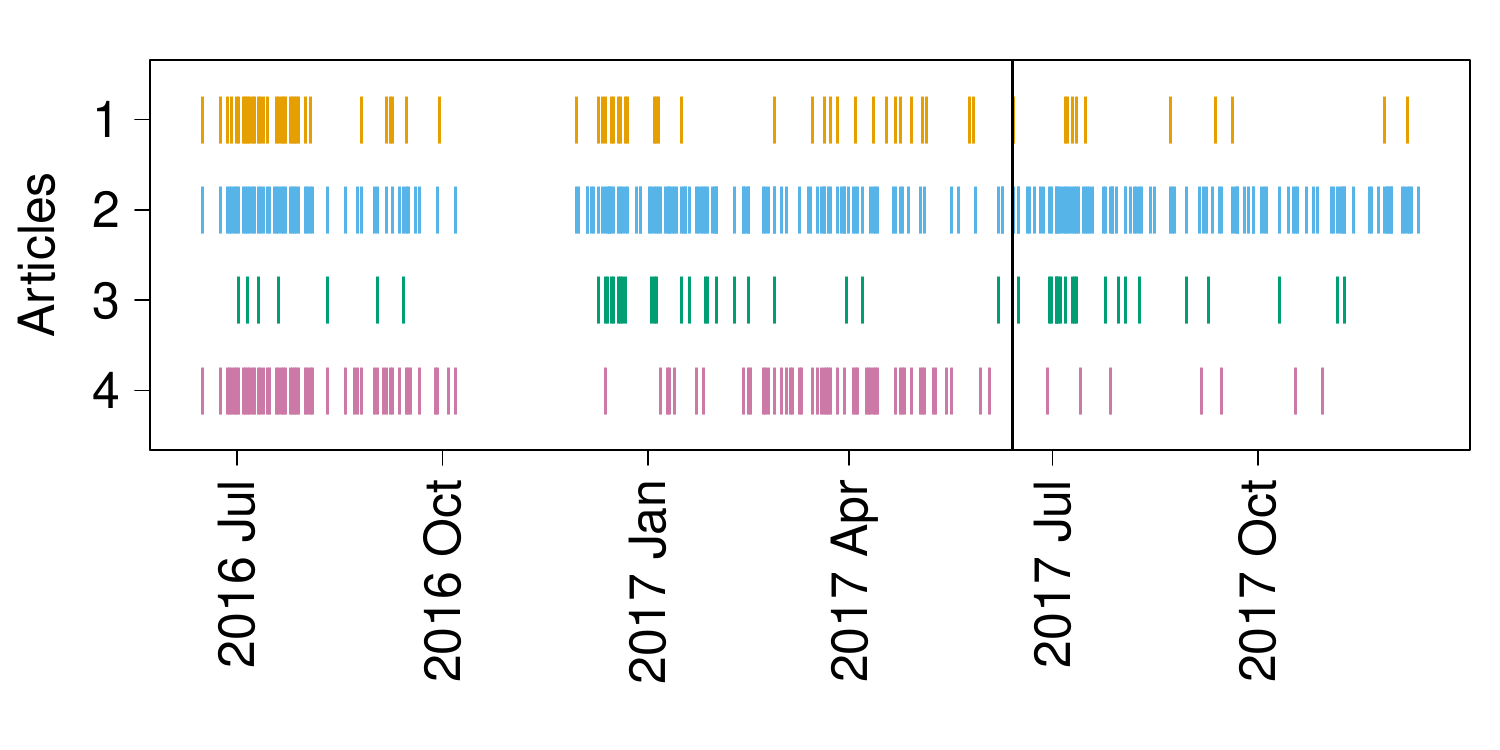}
    \caption{Orders placed for products in Product Class B by BusinessGroup. Each vertical bar indicates that on the particular day an order was placed that included the respective article. The vertical black line indicates the split between training and test data.}
    \label{fig_M4_observations}
\end{figure}

For the plug-in estimate for $b(\cdot)$ of the background rate we use all products in the same class that have sales in the relevant period. As in the above Section~\ref{sec_app_M2_A} we use the prior set up from Section~\ref{sec_priorchoice} to obtain posterior distribution samples for $\Theta$ using Stan \citep{Stan_RStan_2019}.

As outlined above, we fit one proposed models that incorporates inhibition and two benchmark models without inhibition. Model 1 uses normal priors which permit inhibition, whereas  Model 2 is excitation-only and Model 3 is only modelled by the scaled background rate. Figure~\ref{fig_M4_posterior} plots the obtained posterior distributions of the entries of $\mathbf{K}^*$ for Model 1 and Model 2. For Model 1, two parameter indicate product cannibalisation. We estimate $K^*_{32} = -0.60$ ($-1.15, 0.00)$ and $K^*_{43} = -0.40$ ($-0.82, 0.00$). We conclude that orders for Article 3 cannibalise orders of Article 2, as they are both at a higher price point and part of SomeLabel. In addition, orders for Article 4 cannibalise orders for Article 3. As above, we see that the cheaper article (with respect to the suggested retail price) cannibalises the more expensive one, but not vice versa. Also note that Article 1 is not affected by any inhibition, potentially due to the lower price point, the distinct design (white details), and lack of label affiliation, all of which seem to render it unsuitable to potentially substitute the other articles.

\begin{figure}[tp]
    \centering
    \includegraphics[width=.99\linewidth]{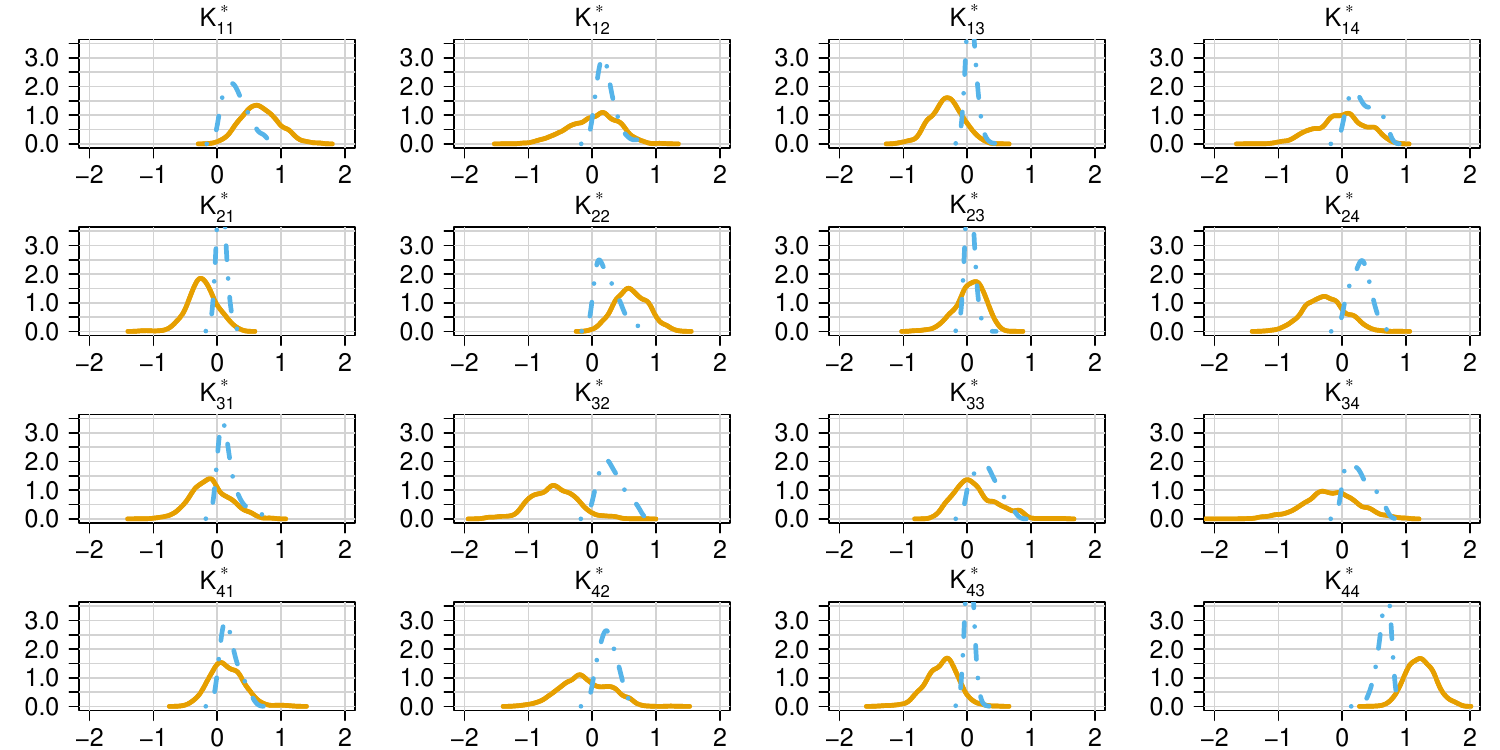}
    \caption{Posterior density estimates for the entries of $\mathbf{K}^*$ based on orders placed for products in product class B by BusinessGroup on the training set. The orange solid line represents the normal priors (Model 1), the blue dot-dashed line is the excitation-only reference model (Model 2). }
    \label{fig_M4_posterior}
\end{figure}

We also examine the shape of the influence kernels (both self and cross influence) in Figure~\ref{fig_M4_beta}. The self-influence kernel has a median of nine days and the cross-influence kernel's media lies at two days. This suggests again that any cross-influences, such as inhibition, are most influential in the days immediately after an order was placed.

\begin{figure}[tp]
    \centering
    \includegraphics[width=.8\linewidth]{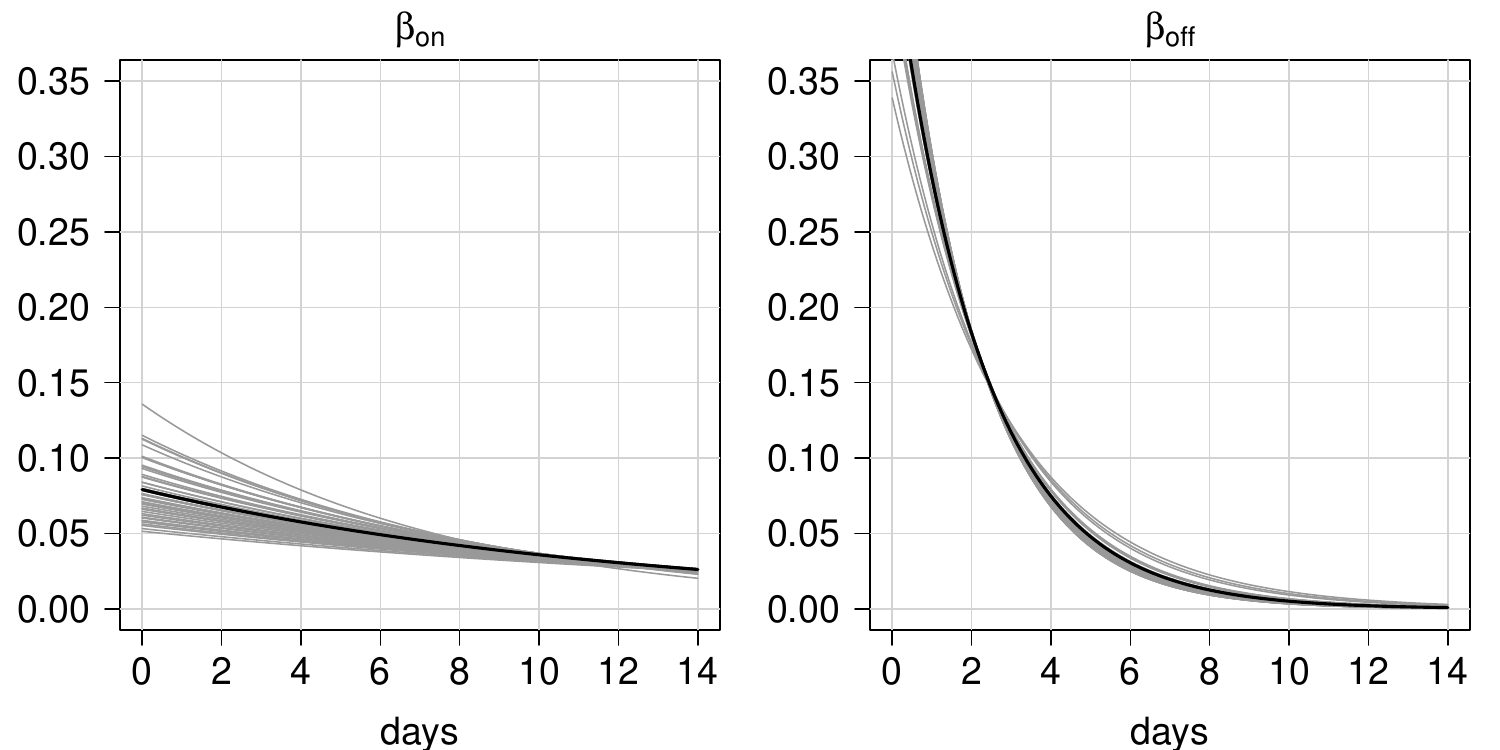}
    \caption{Plots of the influence kernels (self and cross-influence) based on the estimated of Model 1 on the training set for $M=4$ articles. Black line shows influence kernel at posterior mean, the grey lines show influence kernels from $50$ posterior samples. }
    \label{fig_M4_beta}
\end{figure}

Table \ref{tab_comparisonM4} compares the three fitted models both on the training set (using the log-likelihood as well as the DIC) and the test set (predictive log-likelihood). Model 1 with inhibition achieves the highest log-likelihood and lowest DIC the training set, as well as the highest predictive likelihood on the test data. As above, models without inhibition (Model 2 and Model 3) are distinctively sub-par across all metrics. This clearly warrants the consideration of inhibition and hence product cannibalisation when modelling the orders for similar articles. 

\begin{table}
\caption{\label{tab_comparisonM4} Comparisons of three models for $M=4$.} 
    \centering
    \begin{tabular}{clcccc}
    \hline
         &\textbf{Model} & \textbf{Restriction}   & \textbf{loglik train} & \textbf{DIC} &  \textbf{predictive loglik}  \\
         \hline
        1 & Normal Prior & $K_{ij} <1$  & \textbf{-678.73} & \textbf{1381.45} & \textbf{-297.34} \\
        2 & Excitation only & $0 < K_{ij} <1$  & -698.10 & 1411.39 & -313.32 \\
        3 & Background only & $K_{ij} = 0$  & -726.33 & 1460.49 &   -349.23\\
         \hline
    \end{tabular}
  \end{table}

\section{Discussion} \label{sec_discussion}

In this paper we focused on estimating product cannibalisation for wholesale data using the multivariate Hawkes process. On one hand we provided a statistical model to estimate product cannibalisation. On the other hand we made it easier for anyone to use multivariate Hawkes processes with inhibition through our considerations of a non-negative intensity and stability under inhibition.

We used our proposed model to estimate product cannibalisation for $M =2$ and $M=4$ articles and compared our suggested estimation incorporating inhibition to two reference models without inhibition. The superior performance of the models with inhibition across the board gives a strong mandate to consider product cannibalisation when modelling wholesale orders. 

Our work can be extended by considering different link functions and a variety of influence kernels, or differently structured inhibition altogether \citep[for example along the lines of ][]{apostolopoulou_mutually_2019}. Moreover, all stability conditions, including our newly proposed \ref{c3new}, are only sufficient to check stability. It would be of great interest to develop a criterion that was both necessary and sufficient. 

We would also like to extend this work to higher dimensions and to incorporate additional covariates, as our data is rich in articles and information about them. This would warrant both computational considerations (e.g. introduce an upper limit for the influence to ease the likelihood calculation), as well as structural ones (e.g. regularisation). A logical next step is the inclusion of price or order size, which parallels the idea of marked point processes in the earthquake literature \citep[see, for example, ][]{schoenberg_2003_multidimensional}, that lay beyond the scope of this paper. As the data is recorded in days, we would also like to explore whether a multivariate Hawkes process in discrete time \citep[as used by][]{browning_simple_2021} could bring additional benefits. 

Finally, this work represents a new way to estimating product cannibalisation, in particular for the wholesale perspective. We therefore aim to apply our method to different scenarios from a variety of industries to explore where business insights can be generated.


\begin{appendix}
\section{Roots of the Intensity Function} \label{appendix_roots}


This section shows how the roots of the intensity function can be found when using the exponential kernel. As described in Section~\ref{subsec_integrate_intensity}, these can be used to calculate the integral of the intensity exactly for a multivariate Hawkes process with inhibition. \citet{bonnet_maximum_2021} provide an exact integral for the one-dimensional case. In addition, the work by \citet{bonnet_inference_2022} gives similar results to ours (thought they examine the case where $\beta_{ij} = \beta_{j}$), but was first submitted to the ArXiv in May 2022. However, an earlier version of this paper (including the results below) was available on the ArXiv from January 2022 onward.

\subsection{Exponential Kernel with All $\beta_{ij}$ Equal}  \label{appendix_roots_beta}

We start with a special case where all $\beta_{ij}$ are equal and examine the roots both in one dimension and $M$ dimensions. 

\subsubsection{One Dimension}
Assume that we have data $Y = (t_1 \dots t_N)$ from one dimension and the intensity function is defined with the exponential kernel: 
\begin{equation}
    \lambda(t) = \mu + \sum_{i: t > t_i} K \, \beta  \, exp\left(-\beta(t - t_i)\right)
\end{equation}

The intensity can only drop below zero when an event happens. Therefore it is sufficient to check only intervals after an event at which the intensity is negative to see if the intensity becomes positive again before the next event happens.

Then we can find the root $t$ between observation $t_n$ and $t_{n+1}$ in the following way:
\begin{align}
    \mu + \sum_{i = 1}^n K \, \beta  \, exp\left(-\beta(t - t_i)\right)  = 0 \\
    \mu + \sum_{i = 1}^n K \, \beta  \, exp\left(-\beta t \right) \, exp\left(\beta t_i\right) = 0 \\
    exp\left(-\beta t \right) \left[\sum_{i = 1}^n K \,\beta exp\left(\beta t_i\right) \right]  = -\mu \\
    t = \frac{log\left( \frac{-\mu}{\sum_{i = 1}^n K \,\beta exp\left(\beta t_i\right)}\right)}{- \beta}
\end{align}
if $t_n < t < t_{n+1}$.

\subsubsection{$M$ Dimensions}

Now assume we have $M$ dimensional data $Y_1 = \left(t_{1\,1} \dots t_{1\,N_1} \right) \dots Y_M = \left(t_{M\,1} \dots t_{M\,N_M} \right)$ with intensity function
\begin{equation}
     \lambda_i(t) = \mu_i + \sum_{j = 1}^M\sum_{q: t > t_{q\,l}} K_{ji}\, g_{ji}(t-t_{j\,l})
\end{equation}

Again, we only intervals after an event at which the intensity would be negative to see if the intensity becomes positive again before the next event happens. However, we need to check in each interval in each dimension. 

To check in dimension $m$ after an observation $t_n$ from arbitrary dimension we can start with
\begin{align}
    \mu_m + \sum_{q = 1}^M \sum_{i: t_{qi} < t_n} K_q \, \beta  \, exp\left(-\beta(t - t_{qi})\right)  = 0
\end{align}
which leads to the following expression for the root
\begin{align}
    t = \frac{log\left( \frac{-\mu_m}{\sum_{q = 1}^M \sum_{i: t_{q\,i} < t_n} K_q \,\beta exp\left(\beta t_{q\,i}\right)}\right)}{- \beta}
\end{align}
if $t_n < t < t_{n+1}$.

\subsection{General Exponential Kernel} \label{appendix_roots_general}
Now we examine the general case when all betas are different. Here, finding the roots is a polynomial problem:
\begin{align}
    \mu_m +  \sum_{q = 1}^M \sum_{i: t_{qi} < t_n} K_q \, \beta_q  \, exp\left(-\beta_q(t - t_{qi})\right) & = 0 \\
     \mu_m +  \sum_{q = 1}^M \sum_{i: t_{qi} < t_n} K_q \, \beta_q  \, u^{-\beta_q}  \, exp\left(\beta_q t_{qi}\right) & = 0 \text{ where } x = exp(t) \\
     \mu_m +  \sum_{q = 1}^M x^{-\beta_q}   \underbrace{\sum_{i: t_{qi} < t_n} K_q \, \beta_q  \,  exp\left(\beta_q t_{qi}\right)}_{v_q} & = 0 \\
     \mu_m +  \sum_{q = 1}^M x^{-\beta_q} v_q & = 0
\end{align}
This is now a polynomial in $x$ that needs to be solved.


\section{Approximating $\Lambda$ using the Simpson's Rule} \label{appendix_integral}

This section describes the approximation procedure for $\Lambda= \sum_{m = 1}^M \int_0^{T_{max}} \lambda_m(x  ) \, dx$ as mentioned in Section~\ref{subsec_integrate_intensity}. We use the Cubic Simpson's Rule where 
\begin{equation}
    \int_a^b f(x) \, dx \approx \frac{(b - a)}{8} \left[f(a) + 3f\left(\frac{2a + b}{3}\right) + 3f\left(\frac{a + 2b}{3}\right) + f(b)\right]
\end{equation}
Assume we have $M$ dimensional data $Y_1 = \left(t_{1\,1} \dots t_{1\,N_1} \right) \dots Y_M = \left(t_{M\,1} \dots t_{M\,N_M} \right)$ and intensity function $\lambda_m(\cdot)$ for each dimension $m = 1 \dots M$. Algorithm~\ref{alg_Lambda_Simpson} outlines the approximation procedure used. 

\begin{algorithm}[htb!]
\caption{Approximating $\Lambda$ using the Simpson's Rule}\label{alg_Lambda_Simpson}
\begin{algorithmic}[1]
\State Set $Y_{total}  = (0, t_{1\,1} \dots t_{1\,N_1} \dots t_{M\,1} \dots t_{M\,N_M}, T_{max})$ 
\State Order all entries of $Y_{total}$ and call the result $X = (x_1 < \dots < x_P)$ where $P = 2 + \sum_m^M N_m$
\State Define $X^t = \{x_i : x_i \leq t \}$ 
\State Set $res = 0$
\For{$i$ in $1:(P-1)$}
    \State Set $a = x_i$
    \State Set $b = x_{i+1}$
    \For{$m$ in $1:M$}
        \State Set res = res + $\frac{(b - a)}{8} \left[\lambda_m(a | X^a) + 3 \lambda_m\left(\frac{2a + b}{3} | X^a\right) + 3 \lambda_m\left(\frac{a + 2b}{3}| X^a\right) + \lambda_m (b| X^a)\right]$
    \EndFor
\EndFor
\State Return $res$ as the approximation of $\Lambda$
\end{algorithmic}
\end{algorithm}


\section{Total Number of Offsprings} \label{appendix_offsprings}

This section contains the proof of the total number of offsprings from Section~\ref{sec_offsprings}.

We write $K_{ij}^*$ is the total number of offsprings in dimension $j$ that are produced by an event cascade started by an immigrant event in $i$. This can be interpreted as the \textit{marginal} influence from $i$ onto $j$ as both direct and indirect influences are taken into account. We write $\mathbf{K}^*$ where $K_{ij}^*$ is the entry in row $i$ and column $j$. 

Here, $N = (N_1 \dots N_M)^T = $ is the total number of events in each dimension, which can be calculated by $ N = (I - \mathbf{K}^T)^{-1} \mathbf{\mu} $, as described by \citet{hawkes_spectra_1971, jovanovic_cumulants_2015}.

\begin{lemma}
\label{lemma_Ni}
The number of events in each dimension is 
\begin{equation}
    N = (N_1 \dots N_M)^T = \left(\mathbf{K}^{*} + I\right) ^T \mathbf{\mu} \nonumber
\end{equation}
where $I$ is the identity matrix of appropriate dimension $M$. 
\end{lemma}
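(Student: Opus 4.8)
The plan is to chain together the two facts already available in this section: the closed-form expression for the total number of offsprings established in the previous theorem, namely $\mathbf{K}^{*} = (I - \mathbf{K})^{-1} - I$, and the standard formula for the expected number of events, $N = (I - \mathbf{K}^T)^{-1}\mu$, attributed above to \citet{hawkes_spectra_1971, jovanovic_cumulants_2015}. The claim then reduces to a purely algebraic identity relating these two expressions, so no new probabilistic argument is required.

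First I would rearrange the offspring formula into the form $\mathbf{K}^{*} + I = (I - \mathbf{K})^{-1}$, which is immediate. The key step is then to transpose both sides and push the transpose through the matrix inverse. Using the standard identities $(A^{-1})^T = (A^T)^{-1}$ together with $(I - \mathbf{K})^T = I - \mathbf{K}^T$, I obtain
\begin{equation*}
    (\mathbf{K}^{*} + I)^T = \left[(I - \mathbf{K})^{-1}\right]^T = \left[(I - \mathbf{K})^T\right]^{-1} = (I - \mathbf{K}^T)^{-1}.
\end{equation*}
Multiplying on the right by $\mu$ and substituting the known formula for $N$ gives $(\mathbf{K}^{*} + I)^T \mu = (I - \mathbf{K}^T)^{-1}\mu = N$, which is exactly the statement to be proved.

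The only point that genuinely needs care, rather than being routine, is the invertibility of $I - \mathbf{K}$: the whole manipulation presupposes that $(I - \mathbf{K})^{-1}$ exists, and indeed the lemma is stated only for a stable process. I would note explicitly that stability (as captured by \ref{c3new}, $\rho(\mathbf{K}^+) < 1$, or the related conditions in Section~\ref{sec_stability}) guarantees that $1$ is not an eigenvalue of $\mathbf{K}$, so that $I - \mathbf{K}$ is nonsingular and all the inverses above are well defined. Beyond this, the argument is a two-line consequence of the preceding theorem; the main expository obstacle is therefore not the mathematics but being clear about which of the two cited results supplies which piece, so that the lemma reads as a genuine reconciliation of the offspring-based and the first-moment-based characterisations of $N$ rather than a circular restatement.
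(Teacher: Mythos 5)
Your algebra is fine, but the logical direction is inverted, and in the context of this paper that makes the argument circular. The closed-form expression $\mathbf{K}^{*} = (I - \mathbf{K})^{-1} - I$ is not an independent prior fact here: in Appendix~\ref{appendix_offsprings} that theorem is \emph{derived from} Lemma~\ref{lemma_Ni} by equating $(\mathbf{K}^{*} + I)^T\mu$ with the literature formula $(I - \mathbf{K}^T)^{-1}\mu$ and rearranging. So if you prove the lemma by transposing the theorem, you have proved nothing: the lemma is the substantive input and the theorem is the corollary, not the other way around. You half-recognise this when you worry about the lemma reading as ``a circular restatement,'' but you treat that as an expository issue rather than what it is --- a missing proof.

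What the lemma actually requires, and what the paper supplies, is a probabilistic argument from the branching (cluster) representation of Section~\ref{subsec_branching}: every event in dimension $i$ is either an immigrant (expected count $\mu_i T_{max}$) or belongs to the total-offspring cascade of some immigrant in some dimension $j$ (expected count $\mu_j T_{max} K^*_{ji}$, by the very definition of $K^*_{ji}$ as the total number of offsprings in $i$ per event in $j$). Summing over $j$ gives $N_i = \mu_i T_{max} + \sum_{j} \mu_j T_{max} K^*_{ji}$, which is the stated matrix identity. Your route could be rescued if you first proved the theorem independently of the lemma --- e.g.\ by summing the generation-by-generation offspring counts $\mathbf{K} + \mathbf{K}^2 + \dots = (I-\mathbf{K})^{-1} - I$ under $\rho(\mathbf{K}) < 1$ --- but as written you have assumed the harder statement to deduce the easier one. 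Your remark about invertibility of $I - \mathbf{K}$ under stability is a fair point that the paper glosses over, but it does not repair the circularity.
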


\begin{proof}
The Hawkes process can be written as the superposition of Poisson processes as outlined by the branching structure interpretation (see Section~\ref{subsec_branching}). Hence, each $N_i$ consist of two parts:
\begin{enumerate}
    \item The number of immigrants events from the background process with rate $\mu_i$
    \item The number of offsprings in $i$ from an immigrant event in each dimension $j$
\end{enumerate}
Therefore, we can write
\begin{equation}
    N_i = \mu_i T_{max} + \sum_{j = 1}^M \mu_j T_{max}K_{ji}^* 
\end{equation}
In matrix notation this is
\begin{equation}
    N = (N_1 \dots N_M)^T = \left(\mathbf{K}^{*} + I\right) ^T \mathbf{\mu} \label{eq_N_kstar}
\end{equation}
where $I$ is the identity matrix of appropriate dimension $M$. 
\end{proof}

\begin{theorem}
The total number of offsprings is 
\begin{equation}
    \mathbf{K}^{*} = \left(I - \mathbf{K} \right) ^{-1} - I   \nonumber
\end{equation}
\end{theorem}

\begin{proof}
We use the definition of $N$ from the literature
\begin{equation}
   N = (I - \mathbf{K}^T)^{-1} \mathbf{\mu} 
\end{equation}
and relate it to Lemma~\ref{lemma_Ni}:
\begin{equation}
    (I - \mathbf{K^T})^{-1} \mathbf{\mu}  = \left(\mathbf{K}^{*} + I\right) ^T \mathbf{\mu}
\end{equation}
Rearranging, this allows us to write $\mathbf{K}^{*}$ as
\begin{align}
    \mathbf{K}^{*} & =  \left(I - \mathbf{K} \right) ^{-1} - I  \label{eq_Kstar}
\end{align}
\end{proof}

\end{appendix}

\bibliographystyle{apalike}

\bibliography{biblio}
\end{document}